\numberwithin{equation}{section}
\def\csname ver@etex.sty\endcsname{3000/12/31}
\tikzstyle{process} = [ellipse, minimum width=3em, minimum height=2em, text centered, draw=blue, fill=gray!10]
\tikzstyle{arrow} = [thick,->,>=stealth]
\definecolor{accessblue}{cmyk}{1,.3677,0,.3922}
\definecolor{greycolor}{cmyk}{0,0,0,.8}
\let\originalleft\left
\let\originalright\right
\renewcommand{\left}{\mathopen{}\mathclose\bgroup\originalleft}
\renewcommand{\right}{\aftergroup\egroup\originalright}
\def\({\mathopen{}\left(}
\def\){\right)\mathclose{}}
\renewcommand*{\eqref}[1]{\hyperref[{#1}]{\textup{\tagform@{\ref*{#1}}}}}
\newcommand*{\eqdef}{\mathrel{\vcenter{\baselineskip0.5ex \lineskiplimit0pt\hbox{.}\hbox{.}}}=}
\newcommand*{\defeq}{=\mathrel{\vcenter{\baselineskip0.5ex \lineskiplimit0pt\hbox{.}\hbox{.}}}}
\newtheorem{theorem}{Theorem}[section]
\newtheorem{lemma}[theorem]{Lemma}
\newtheorem{conjecture}[theorem]{Conjecture}
\newtheorem{remark}[theorem]{Remark}
\crefname{theorem}{Theorem}{Theorems}
\crefname{lemma}{Lemma}{Lemmas}
\Crefname{conjecture}{Conjecture}{Conjectures}
\crefname{ineq}{inequality}{inequalities}
\crefname{diag}{diagram}{diagrams}
\crefname{remark}{Remark}{Remarks}
\Crefname{figure}{Figure}{Figures}
\def\id{\mathbbm{1}}
\def\F{\mathbb{F}}
\def\rl{\mathbb{R}}
\def\P{\mathbb{P}}
\def\Z{\mathbb{Z}}
\def\rd{\mathrm{d}}
\def\cF{\mathcal{F}}
\def\cM{\mathcal{M}}
\def\cP{\mathcal{P}}
\def\QFT{\mathrm{QFT}}
\title{Fixed-point Grover Adaptive Search for Quadratic Binary Optimization Problems}
\date{\today}
\author{\'Akos Nagy}
\address[\'Akos Nagy]{BEIT Canada, Toronto, Ontario}
\email{\href{mailto:contact@akosnagy.com}{contact@akosnagy.com}}
\urladdr{\href{https://akosnagy.com/}{akosnagy.com}}
\author{Jaime Park}
\address[Jaime Park]{Vanderbilt University, Nashville, Tennessee}
\email{\href{mailto:jaime.s.park@vanderbilt.edu}{jaime.s.park@vanderbilt.edu}}
\author{Cindy Zhang}
\address[Cindy Zhang]{}
\email{\href{mailto:xindizhang.phys@gmail.com}{xindizhang.phys@gmail.com}}
\author{Atithi Acharya}
\address[Atithi Acharya]{Rutgers, The State University of New Jersey}
\email{\href{mailto:atithi.8@rutgers.edu}{atithi.8@rutgers.edu}}
\author{Alex Khan}
\address[Alex Khan]{University of Maryland QLab, College Park, Maryland}
\email{\href{mailto:askhan@umd.edu}{askhan@umd.edu}}
\begin{document}

\begin{abstract}
    We study a Grover-type method for Quadratic Unconstrained Binary Optimization (QUBO) problems. For an $n$-dimensional QUBO problem with $m$ nonzero terms, we construct a marker oracle for such problems with a tuneable parameter, $\Lambda \in \left[ 1, m \right] \cap \Z$. At $d \in \Z_+$ precision, the oracle uses $O \( n + \Lambda d \)$ qubits, has total depth of $O \( \tfrac{m}{\Lambda} \log_2 (n) + \log_2 \( d \) \)$, and non-Clifford depth of $O \( \tfrac{m}{\Lambda} \)$. Moreover, each qubit required to be connected to at most $O \( \log_2 \( \Lambda + d \) \)$ other qubits. In the case of a maximum graph cuts, as $d = 2 \left\lceil \log_2 (n) \right\rceil$ always suffices, the depth of the marker oracle can be made as shallow as $O \( \log_2 (n) \)$. For all values of $\Lambda$, the non-Clifford gate count of these oracles is strictly lower (at least by a factor of $\sim 2$) than previous constructions.
    
    Furthermore, we introduce a novel \textit{Fixed-point Grover Adaptive Search for QUBO Problems}, using our oracle design and a hybrid Fixed-point Grover Search, motivated by the works of Boyer et al. and Li et al. This method has better performance guarantees than previous Grover Adaptive Search methods. Some of our results are novel and useful for any method based on Fixed-point Grover Search. Finally, we give a heuristic argument that, with high probability and in $O \( \tfrac{\log_2 (n)}{\sqrt{\epsilon}} \)$ time, this adaptive method finds a configuration that is among the best $\epsilon 2^n$ ones.
\end{abstract}

\maketitle

\section{Introduction}

In \cite{grover_quantum_97}, Grover introduced a quantum search algorithm that can be viewed as an unstructured search algorithm whose query complexity scales with the inverse square root of the ratio of the number of marked items to all items. One drawback of Grover's algorithm is the ``souffl\'e problem'', that is, the fact that one needs to know the exact ratio of marked items to compute the correct number of queries prescribed by the algorithm. Not enough or too many queries can ``under/overcook'' the quantum state. This problem was partially remedied by the Fixed-point Grover Search (FPGS) of Yoder et al. \cite{yoder_fixed_2014}. The FPGS is similar to Grover's algorithm in many ways: both are quantum amplitude amplification methods, both have the same query complexity, and they can typically be implemented using the same oracles. Geometrically, the difference is that while Grover's algorithm uses reflections in the real plane spanned by the initial and target states, FPGS uses phase gates in the complexification of that subspace. The benefit of using FPGS is that one no longer needs to know the exact number of marked states; it is enough to only have a lower bound for that number. An important caveat is that the lower this bound is, the more queries are required, but one cannot overcook the quantum state. The query complexity of FPGS scales with the inverse square root of \textit{lower bound used for the ratio of marked states}. Thus, if this lower bound is close to the exact ratio, this latter fact can superficially be viewed as retaining the ``quadratic speedup''. However, that is somewhat misleading, as finding lower bounds might still be hard. In order to circumvent this issue, many adaptive methods have been designed that retain the quadratic speedup; the most important ones for our work are \cites{boyer_tight_1998,li_quantum_2019}. In particular, we refine the ideas of Li et al. \cite{li_quantum_2019} to get an adaptive search method that, to our knowledge, has the best current performance guarantee, in terms of total expected oracle calls and can be applied to arbitrary boolean quantum search.

A QUBO problem consists of a (real-valued) quadratic polynomial on $n$ boolean variables, and thus it can be described, uniquely, up to an overall additive constant, by an $n$-by-$n$ upper-triangular real matrix. Such problems provide some of the most interesting NP-Complete problems, such as cluster analysis, maximum graph cuts, and the Ising model. Algorithms to find approximate maxima of QUBOs have been extensively studied for a long time via classical methods. Recently, classical--quantum hybrid methods have been proposed, where the parameters of the quantum algorithm are classically optimized. The two main types of such algorithms are of QAOA-type \cites{farhi_quantum_2014,szegedy_qaoa_2019,marsh_combinatorial_2020,golden_threshold_2021} and Grover-type \cites{gilliam_grover_2021,giuffrida_solving_2021,sano_accelerating_2023}. Compared to QAOA, Grover-type algorithms have certain set advantages and drawbacks. The promise of QAOA is that, using an easily implementable, low-depth circuit, one can prepare a quantum state whose dominant components in the computational basis correspond to high value configurations of the QUBO problem. On the other hand, Grover-type algorithms tend to have more complex circuits and the amplification is only sensitive to whether or not a configuration is above a given threshold which is a hyperparameter of the method. However, what makes Grover-type methods still appealing is the fact that they usually require much less classical optimizations.

This paper explores a Grover-type method for QUBO problems. In \cite{gilliam_grover_2021}, Gilliam et al. proposed a marker oracle design for QUBO problems: given an instance of an integer QUBO problem, say, by $n$-by-$n$ upper-triangular integer matrix, $Q$, and a threshold $y \in \Z$, their oracle marks states $\ket{x}$ such that $x^T Q x > y$. This is precisely the ingredient needed to run Grover-type search algorithms to find configurations with values above the threshold. Furthermore, in the same paper, Gilliam et al. propose a \textit{Grover Adaptive Search} for QUBO problems, based on the adaptive Grover search of Boyer et al. \cite{boyer_tight_1998}. We explain their construction for the marker oracle in \Cref{sec:qubo_encoder} and their Grover Adaptive Search in \Cref{sec:adaptive}. This paper was inspired by the results of Gilliam et al. and our key results can be viewed as improvements on both their oracle design and adaptive method. Most notably, our method has improved depth and total gate counts for both for Clifford and non-Clifford gates. Furthermore, we provide a variant of the oracle with a tuneable number of ancillas, which uses the same (amount and type of) non-Clifford gates and has reduced the total circuit depth, while having only an $O \( 1 \)$-fold increase in the Clifford gate count. In the shallowest case the depth is $O \( \log_2 \( d \) \)$---in particular, independent of $n$---and uses at most $\tfrac{n^2 + n + 2}{2} d$ ancillas. While we focus on unconstrained problems, constraints can readily be handled, as in \cite{gilliam_grover_2021}*{Figure~7}. Finally, our methods can be combined with other, instance-specific circuit optimization techniques; for example \cites{sano_accelerating_2023}.

\smallskip

\subsection*{Organization of the paper:} In \Cref{sec:qdicts_and_qubos}, we introduce QUBO problems and present our design for QUBO quantum dictionary encoders. In \Cref{sec:grover_for_qubo}, we introduce the Fixed-point Grover Search and, using our oracle, apply it to QUBO problems. \Cref{sec:adaptive} defines and studies the adaptive version of the Fixed-point Grover Search. Finally, in \Cref{sec:comparison}, we test and compare our method to the Grover Adaptic Search for QUBO of Gilliam et al. in \cite{gilliam_grover_2021}.

\bigskip

\section{Quantum Dictionaries and QUBO problems}
\label{sec:qdicts_and_qubos}

Given a function $f : \F_2^n \rightarrow \rl$, the associated (Unconstrained) Binary Optimization problem is the task of finding an element $x \in \F_2^n$ such that $f (x)$ is maximal. Many interesting Binary Optimization problems are quadratic, such as finding maximum graph cuts or the Max $2$-SAT problems, and thus, most of the contemporary research centers around Quadratic Unconstrained Binary Optimization (QUBO) problems.

The first main contribution of the paper is an oracle design for QUBO problems. More concretely, we define and construct (projective) encoder oracles for quadratic functions $f : \F_2^n \rightarrow \left[ - 2^{d - 1}, 2^{d - 1} \right) \cap \Z \cong \F_2^d$ below. These oracles have applications in, for example, Grover-type algorithms and threshold QAOA \cites{golden_threshold_2021,bridi_analytical_2024}. While designs for such oracles have already existed, cf. \cite{gilliam_grover_2021}, ours has better circuit depth and size, both for Clifford and non-Clifford gates. Furthermore, we construct a variant with quadratically many ancillas, but also with logarithmic depth. A generalization of these oracles is given in \cite{nagy_novel_2024}, which can be applied to polynomials of arbitrary degrees.

\smallskip

In this paper we use two common labelling for the computational basis. First, given a bit string $x = \( x_0, x_1, \ldots, x_{n - 1} \) \in \F_2^n$, we write $\ket{x} \eqdef \ket{x_0} \ket{x_1} \cdots \ket{x_{n - 1}}$. Similarly, if $\bar{y} \in \Z$ and $d \in \Z_+$, then let $y = \( y_0, y_1, \ldots, y_{d - 1} \) \in \F_2^d$ be such that $\bar{y} \equiv \sum_{j = 0}^{d - 1} y_j 2^{d - 1 - j}$ and $\ket{\bar{y}}_d \eqdef \ket{y}$, that is, the state is the $\bar{y}^{\mathrm{th}} \ \( \textnormal{mod } 2^d \)$ computational basis element in the usual ordering.

With the above notation, a \textit{quantum dictionary encoder}, $U_{f, d}$, corresponding to a function, $f : \F_2^n \rightarrow \Z$, is any oracle on $n + d$ qubits defined via
\begin{equation}
    U_{f, d} \ket{x} \ket{y}_d \eqdef \ket{x} \ket{y + f (x)}_d.
\end{equation}
More generally, $U_{f, d}$ is a \textit{projective} quantum dictionary encoder, if
\begin{equation}
    U_{f, d} \ket{x} \ket{y}_d \eqdef e^{i \alpha_{f, x, d}} \ket{x} \ket{y + f (x)}_d, \label{eq:projective_encoder}
\end{equation}
for some $\alpha_{f, x, y} \in \rl$.

\begin{remark}
    In many quantum algorithms, including ours, quantum dictionary encoders appear in pairs in the form
    \begin{equation}
        \begin{quantikz}
            \cdots \hspace{5mm}  & \gate{U_{f, d}}   & \gate{D}  & \gate{U_{f, d}^\dagger}   & \hspace{5mm} \cdots
        \end{quantikz}
    \end{equation}
    where $D$ is diagonal in the computational basis. Thus, using a projective encoder yields the same outcome, since
    \begin{align}
        U_{f, d}^\dagger D U_{f, d} \ket{x} \ket{y}_d   &= e^{i \alpha} U_{f, d}^\dagger D \ket{x} \ket{y + f (x)}_d \\
            &= \eta U_{f, d}^\dagger \( e^{i \alpha} \ket{x} \ket{y + f(x)}_d \) \\
            &= \eta \ket{x} \ket{y}_d,
    \end{align}
    where $\eta$ is the appropriate eigenvalue of $D$.
\end{remark}

\bigskip

\subsection{QUBO encoder}
\label{sec:qubo_encoder}

Let $\QFT_d$ denote the $d$-qubit Quantum Fourier Transform (QFT). In order to construct $U_{f, d}$ for a quadratic $f$, we use Draper's QFT based shift operator \cite{draper_addition_2000}, which we summarize in the following theorem:

\begin{theorem}
    Let $\tilde{U}_{f, d}$ be defined as
    \begin{equation}
        \tilde{U}_{f, d} \ket{x} \ket{y}_d = \exp \( \tfrac{2 \pi i}{2^d} f \( x \) y \) \ket{x} \ket{y}_d. \label{eq:f_phaser}
    \end{equation}
    Then
    \begin{equation}
        U_{f, d} = \( \id \otimes \QFT_d^\dagger \) \circ \tilde{U}_{f, d} \circ \( \id \otimes \QFT_d \). \label{eq:quantum_adder}
    \end{equation}
    is a quantum dictionary encoder for $f$.
\end{theorem}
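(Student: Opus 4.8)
The plan is to verify the defining relation $U_{f, d} \ket{x} \ket{y}_d = \ket{x} \ket{y + f (x)}_d$ by a direct computation, tracking a computational basis state $\ket{x} \ket{y}_d$ through the three factors in \eqref{eq:quantum_adder}. The only external input required is the standard convention for the Quantum Fourier Transform, namely
\begin{equation}
    \QFT_d \ket{y}_d = \frac{1}{\sqrt{2^d}} \sum_{k = 0}^{2^d - 1} \exp \left( \tfrac{2 \pi i}{2^d} y k \right) \ket{k}_d,
\end{equation}
together with the observation that $\tilde{U}_{f, d}$ is diagonal in the computational basis. (One should note that the sign convention in \eqref{eq:f_phaser} is taken to match that of the QFT, so that the composition in \eqref{eq:quantum_adder} produces an \emph{addition}; the opposite convention would produce a subtraction and require interchanging $\QFT_d$ and $\QFT_d^\dagger$.)

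First I would apply $\id \otimes \QFT_d$ to move the second register into the Fourier basis, obtaining $\ket{x} \otimes \tfrac{1}{\sqrt{2^d}} \sum_{k} \exp \left( \tfrac{2 \pi i}{2^d} y k \right) \ket{k}_d$. Next, since $\tilde{U}_{f, d}$ acts diagonally, it multiplies each basis vector $\ket{x} \ket{k}_d$ in this superposition by $\exp \left( \tfrac{2 \pi i}{2^d} f (x) k \right)$, where the dummy label in \eqref{eq:f_phaser} is now taken to be the summation index $k$; the two phases then combine so that the state becomes
\begin{equation}
    \ket{x} \otimes \frac{1}{\sqrt{2^d}} \sum_{k = 0}^{2^d - 1} \exp \left( \tfrac{2 \pi i}{2^d} \( y + f (x) \) k \right) \ket{k}_d = \ket{x} \otimes \QFT_d \ket{y + f (x)}_d.
\end{equation}
A final application of $\id \otimes \QFT_d^\dagger$ therefore returns $\ket{x} \ket{y + f (x)}_d$, which is exactly the defining property of a quantum dictionary encoder.

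The computation is routine, so there is no serious obstacle; the two points worth emphasizing are the conceptual content. First, because $\tilde{U}_{f, d}$ is diagonal it distributes over the superposition produced by $\QFT_d$, which is precisely why the bound variable in its definition may be freely identified with the summation variable $k$—this is the sense in which conjugating a diagonal phase by the QFT implements a translation. Second, the addition $y + f (x)$ is automatically understood modulo $2^d$, since each phase $\exp \left( \tfrac{2 \pi i}{2^d} \( \cdot \) k \right)$ is periodic of period $2^d$ in its argument; this is consistent with the convention that $\ket{\bar{y}}_d$ depends on $\bar{y}$ only modulo $2^d$. I would close by invoking linearity to extend the identity from computational basis states to all of the $(n + d)$-qubit Hilbert space, confirming that $U_{f, d}$ is a quantum dictionary encoder for $f$.
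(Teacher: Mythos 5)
Your proof is correct. The paper itself gives no proof of this theorem—it is stated as a summary of Draper's QFT-based adder and justified by citation to \cite{draper_addition_2000}—and your direct computation (conjugating the diagonal phase $\tilde{U}_{f, d}$ by the QFT, identifying the bound label with the summation index, and noting that everything is automatically modulo $2^d$) is exactly the standard argument underlying that citation, including the correct handling of the sign convention and the reduction of $y + f(x)$ mod $2^d$ to match the paper's labelling of $\ket{\bar{y}}_d$.
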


The above theorem allows the computation of $f(x)$ to be done in the ``phase-space'' and reduces the construction of $U_{f, d}$ to the construction of $\tilde{U}_{f, d}$, which we do below.

\smallskip

Before we present the design of $\tilde{U}_{f, d}$, let us make a few definitions. Let now $f : \F_2^n \rightarrow \left[ - 2^{d - 1}, 2^{d - 1} \) \cap \Z$ be quadratic, and write
\begin{equation}
    f(x) = f(0) + \sum\limits_{j, k = 1}^n Q_{jk} x_j x_k,
\end{equation}
where $\( Q_{jk} \)_{j, k \in [n]}$ is a real, symmetric matrix. Let $\oplus$ be the addition modulo $2$ (that is, logical XOR). Then we have
\begin{equation}
    x_j x_k = \tfrac{1}{2} \( x_j + x_k - x_j \oplus x_k \),
\end{equation}
and thus, if we set $q_j \eqdef \sum_k Q_{jk}$, then
\begin{equation}
    f(x) = f(0) + \sum\limits_{j = 0}^{n - 1} q_j x_j - \sum\limits_{j = 0}^{n - 2} \sum\limits_{k = j + 1}^{n - 1} Q_{jk} x_j \oplus x_k. \label{eq:f_rewrite}
\end{equation}
For $k \in \rl$, let us define the $d$-qubit operators
\begin{equation}
    \cP_d \( k \) \eqdef \otimes_{j = 0}^{d - 1} \begin{pmatrix} 1 & 0 \\ 0 & e^{\scriptscriptstyle \frac{\pi i}{2^j} k} \end{pmatrix},
\end{equation}
or, equivalently
\begin{equation}
    \cP_d \( k \) \ket{y}_d \eqdef e^{\scriptscriptstyle \frac{2 \pi i}{2^d} k y} \ket{y}_d.
\end{equation}
Let $\tilde{U}_\emptyset \eqdef \id \otimes \cP_d \( q_\emptyset \)$, where
\begin{equation}
    q_\emptyset \eqdef f(0) + \tfrac{1}{4} \mathrm{tr} \( Q \) + \tfrac{1}{4}  \mathrm{sum} \( Q \),
\end{equation}
and thus
\begin{equation}
    \tilde{U}_\emptyset \ket{x} \ket{y}_d = e^{i \alpha_{\emptyset, y}} \ket{x} \ket{y}_d,
\end{equation}
where
\begin{equation}
    \alpha_{j, y} = \tfrac{2 \pi}{2^d} q_\emptyset y, \label{eq:alpha_nully}
\end{equation}
Let the gates $\tilde{U}_j$ be
\begin{equation}
    \begin{quantikz}
        \lstick{$\ket{x_j}$}    & \ctrl{1}              & \qw                                   & \ctrl{1}              & \rstick{$\ket{x_j}$} \\
        \lstick{$\ket{y}_d$}    & \gate{X^{\otimes d}}  & \gate{\cP_d \( - \tfrac{q_j}{2} \)}    & \gate{X^{\otimes d}}  & \rstick{$e^{i \alpha_{j, y}} \ket{y}_d$}
    \end{quantikz}
\end{equation}
where
\begin{equation}
    \alpha_{j, y} = \tfrac{2 \pi}{2^d} q_j \( x_j y - \tfrac{x_j}{2} \( 2^d - 1 \) - \tfrac{y}{2} \), \label{eq:alpha_jy}
\end{equation}
and the gates $\tilde{U}_{jk}$ be
\begin{equation}
    \begin{quantikz}
        \lstick{$\ket{x_j}$}    & \ctrl{1}    & \qw                     & \qw                                   & \qw                   & \ctrl{1}  & \rstick{$\ket{x_j}$} \\
        \lstick{$\ket{x_k}$}    & \targ{}     & \ctrl{1}                & \qw                                   & \ctrl{1}              & \targ{}   & \rstick{$\ket{x_k}$} \\
        \lstick{$\ket{y}_d$}    & \qw         & \gate{X^{\otimes d}}    & \gate{\cP_d \( \tfrac{Q_{jk}}{2} \)}   & \gate{X^{\otimes d}}  & \qw   & \rstick{$e^{i \alpha_{j, k, y}} \ket{y}_d$}
    \end{quantikz}
\end{equation}
where
\begin{equation}
    \begin{aligned}
        \alpha_{j, k, y}    &= \tfrac{2 \pi}{2^d} Q_{jk} \big( - \( x_j \oplus x_k \) y \\
                            & \quad + \tfrac{x_j \oplus x_k}{2} \( 2^d - 1 \) + \tfrac{y}{2} \big), \label{eq:alpha_jky}
    \end{aligned}
\end{equation}
Note that the \textit{fanout}, or \textit{multi-target CNOT} gates
\begin{equation}
    \begin{quantikz}
        \lstick{$\ket{\mathrm{control}}$}   & \ctrl{1}              & \qw \rstick{$\ket{\mathrm{control}}$} \\
        \lstick{$\ket{y}_d$}                & \gate{X^{\otimes d}}  & \qw \rstick{$\left\{ \begin{array}{ll} \ket{y}_d, & \mbox{if } \mathrm{control} = 0, \\ \ket{ - y - 1}_d, & \mbox{otherwise.} \end{array} \right.$}
    \end{quantikz}
\end{equation}
can be implemented using $\approx 2 d$ many CNOT gates with $\approx 2 \log_2 \( d \)$ depth; cf. \cite{low_trading_2024}*{Appendix~B.1}. With these definitions, let
\begin{equation}
    \tilde{U}_{f, d} \eqdef \tilde{U}_\emptyset \circ \prod_{j = 0}^{n - 1} \tilde{U}_i \circ \prod_{j = 0}^{n - 2} \prod_{k = j + 1}^{n - 1} \tilde{U}_{jk}.
\end{equation}

\begin{theorem}
    \label{theorem:qubo_encoder_no_ancillas}
    For each $x \in \F_2^n$ and $y \in \F_2^d$, we have that
    \begin{equation}
        \tilde{U}_{f, d} \ket{x} \ket{y}_d = \exp \( \tfrac{2 \pi i}{2^d} f \( x \) y + i \alpha_{f, x, y} \) \ket{x} \ket{y}_d,
    \end{equation}
    with $\alpha_{f, x, y} \eqdef \tfrac{\pi \( 1 - 2^d \)}{2^d} \( f \( x \) - f(0) \)$. Thus, using \cref{eq:quantum_adder,eq:f_rewrite} and the above equation we get that
    \begin{equation}
        U_{f, d} \eqdef \( \id \otimes \QFT_d^\dagger \) \circ \( \tilde{U}_{f, d} \) \circ \( \id \otimes \QFT_d \). \label{eq:proj_enc}
    \end{equation}
    is a projective quantum dictionary encoder for $f$, with the same \textit{garbage phases} $\alpha_{f, x, y}$.    
\end{theorem}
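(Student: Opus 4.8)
\emph{Proof proposal.} The plan is to exploit that $\tilde{U}_{f, d}$ is, by construction, a product of operators each diagonal in the computational basis; its action on a fixed basis vector $\ket{x} \ket{y}_d$ is therefore multiplication by the product of the individual phases. First I would check that each elementary factor is indeed diagonal and compute its phase. For $\tilde{U}_\emptyset = \id \otimes \cP_d(q_\emptyset)$ this is immediate from \cref{eq:alpha_nully}. For $\tilde{U}_j$ and $\tilde{U}_{jk}$ the key point is that the controlled $X^{\otimes d}$ gates occur in uncomputing pairs (as do the two CNOTs in $\tilde{U}_{jk}$), so each factor returns $\ket{x}$ unchanged and multiplies $\ket{y}_d$ by a scalar. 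That scalar follows from the fanout identity $X^{\otimes d} \ket{y}_d = \ket{2^d - 1 - y}_d$ together with $\cP_d(k) \ket{y}_d = e^{\frac{2 \pi i}{2^d} k y} \ket{y}_d$: when the relevant control ($x_j$, respectively $x_j \oplus x_k$) is $0$ the factor evaluates $\cP_d$ at $y$, and when it is $1$ it evaluates $\cP_d$ at $2^d - 1 - y$. A two-case check then reproduces exactly the exponents $\alpha_{j, y}$ and $\alpha_{j, k, y}$ of \cref{eq:alpha_jy,eq:alpha_jky}.

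Since every factor is diagonal, they commute in their action on $\ket{x} \ket{y}_d$, and the total phase exponent is the sum
\[
    \alpha_{\emptyset, y} + \sum_{j = 0}^{n - 1} \alpha_{j, y} + \sum_{j = 0}^{n - 2} \sum_{k = j + 1}^{n - 1} \alpha_{j, k, y} .
\]
Next I would split each summand into a part linear in $y$ and a part independent of $y$ and collect the two groups. Up to the common prefactor $\tfrac{2 \pi}{2^d}$, the $y$-linear coefficients are $q_\emptyset$, the $q_j (x_j - \tfrac12)$, and the $Q_{jk} (\tfrac12 - (x_j \oplus x_k))$, while the $y$-independent terms carry the $(2^d - 1)$ offsets $- \tfrac12 q_j x_j (2^d - 1)$ and $\tfrac12 Q_{jk} (x_j \oplus x_k)(2^d - 1)$.

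The heart of the proof, and the step I expect to be most delicate, is the bookkeeping showing the $y$-linear group equals $\tfrac{2 \pi}{2^d} f(x) y$. Writing $q_j = \sum_k Q_{jk}$, so that $\sum_j q_j = \mathrm{sum}(Q)$, and using the symmetry identity $\mathrm{sum}(Q) = \mathrm{tr}(Q) + 2 \sum_{j < k} Q_{jk}$, one finds that the contributions constant in $x$, namely the three terms of $q_\emptyset$ together with $- \tfrac12 \sum_j q_j$ and $\tfrac12 \sum_{j < k} Q_{jk}$, cancel identically; this cancellation is exactly what the definition of $q_\emptyset$ was arranged to produce. What survives is $f(0) + \sum_j q_j x_j - \sum_{j < k} Q_{jk} (x_j \oplus x_k)$, which is $f(x)$ by \cref{eq:f_rewrite}. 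For the $y$-independent group, factoring out $\tfrac{\pi (2^d - 1)}{2^d}$ leaves $- \( \sum_j q_j x_j - \sum_{j < k} Q_{jk} (x_j \oplus x_k) \) = -(f(x) - f(0))$, again by \cref{eq:f_rewrite}, which is precisely the stated $\alpha_{f, x, y} = \tfrac{\pi (1 - 2^d)}{2^d} (f(x) - f(0))$.

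Finally, to deduce that $U_{f, d}$ of \cref{eq:proj_enc} is a projective encoder, I would observe that $\alpha_{f, x, y}$ does not in fact depend on $y$, so for each fixed $x$ it is a global scalar. Conjugating $\tilde{U}_{f, d}$ by $\QFT_d$ as in \cref{eq:quantum_adder} turns the $y$-diagonal phase $e^{\frac{2 \pi i}{2^d} f(x) y}$ into the shift $\ket{y}_d \mapsto \ket{y + f(x)}_d$ (Draper's adder), while the $y$-independent factor $e^{i \alpha_{f, x, y}}$ commutes through the transform untouched. This establishes \cref{eq:projective_encoder}, with garbage phase independent of $y$ equal to $\alpha_{f, x, y}$, as claimed.
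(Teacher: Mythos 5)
Your proposal is correct and takes essentially the same route as the paper's proof: the paper likewise treats $\tilde{U}_{f, d}$ as a product of gates diagonal in the computational basis, regards the phase identities \cref{eq:alpha_nully,eq:alpha_jy,eq:alpha_jky} as a straightforward verification (which you spell out via the two-case check with $X^{\otimes d} \ket{y}_d = \ket{2^d - 1 - y}_d$), sums the exponents, and regroups them into the $y$-linear part $f(x) y$---using the definition of $q_\emptyset$ for exactly the cancellation you describe---plus the $y$-independent garbage phase, concluding with Draper's identity \cref{eq:quantum_adder} and the observation that a $y$-independent phase factors out of the Fourier sum. The only flaw is a wording slip: not all three terms of $q_\emptyset$ cancel---only $\tfrac{1}{4} \mathrm{tr} (Q) + \tfrac{1}{4} \mathrm{sum} (Q)$ cancel against $- \tfrac{1}{2} \sum_j q_j + \tfrac{1}{2} \sum_{j < k} Q_{jk}$, while $f(0)$ survives to form $f(x)$, exactly as your own next sentence correctly states.
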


\begin{proof}
    The verifications of \cref{eq:alpha_nully,eq:alpha_jy,eq:alpha_jky} is straightforward and they yield
    \begin{equation}
        \tilde{U}_{f, d} \ket{x} \ket{y}_d = \exp \( \tfrac{2 \pi i}{2^d} \varphi_{x, y} \) \ket{x} \ket{y}_d
    \end{equation}
    where
    \begin{align}
        \varphi_{x, y}  &\eqdef q_\emptyset y + \sum\limits_{j = 0}^{n - 1} q_j x_j y - \sum\limits_{j = 0}^{n - 2} \sum\limits_{k = j + 1}^{n - 1} Q_{jk} x_j \oplus x_k y \\
                        & \quad - \tfrac{2^d - 1}{2} \( \sum\limits_{j = 0}^{n - 1} q_j x_j + \sum\limits_{j = 0}^{n - 2} \sum\limits_{k = j + 1}^{n - 1} Q_{jk} x_j \oplus x_k \) \\
                        & \quad - \tfrac{1}{2} \( \sum\limits_{j = 0}^{n - 1} q_j y - \sum\limits_{j = 0}^{n - 2} \sum\limits_{k = j + 1}^{n - 1} Q_{jk} y \) \\
                        & = \( q_\emptyset + f(x) - f(0) \) y \\
                        & \quad - \tfrac{2^d - 1}{2} \( f(x) - f(0) \) \\
                        & \quad - \tfrac{1}{2} \( \mathrm{sum} \( Q \) - \tfrac{1}{2} \( \mathrm{sum} \( Q \) - \mathrm{tr} \( Q \) \) \) y \\
                        & = f(x) y - \tfrac{2^d - 1}{2} \( f(x) - f(0) \),
    \end{align}
    which concludes the proof.
\end{proof}

\smallskip

\begin{remark}
    We have so far restricted our discussion to integer-valued QUBO problems. However, the definition of $\cP_d \( k \)$ makes sense for any $k \in \rl$, not just for integers. When $k \notin \Z$, and we let $[k] \in \Z$ and $\tilde{k} \in \( 0, 1 \)$ be its integer and fractional parts, respectively, then we get
    \begin{equation}
        \QFT_d^\dagger \circ \cP_d \( k \) \circ \QFT_d \ket{y}_d = \sum\limits_{z = 0}^{2^d - 1} \tfrac{e^{2 \pi \tilde{k} i} - 1}{e^{\scriptscriptstyle {2 \pi i}{2^d} \( y - z + k \)} - 1} \ket{z}_d.
    \end{equation}
    This yields a Fej\'er type distribution on the bit strings with the probability of measuring either $y + [k]$ or $y + [k] + 1$ being at least $\tfrac{8}{\pi^2} \approx 81\%$; cf. \cite{gilliam_grover_2021}*{Appendix~B.2.}.

    Furthermore, since $\F_2^n$ is a finite set, small perturbations of the coefficient matrix do not change the order of the values, meaning that small ``roundings'' of the coefficients yield equivalent optimization problems.

    Thus, the same oracle design can be, in some cases, used for real-valued QUBO problems.
\end{remark}

\medskip

\subsection{Using ancillas to reduce depth:}
\label{sec:ancillas}
Let now $m$ be the number of nonzero terms in \cref{eq:f_rewrite} and let $1 \leqslant \Lambda \leqslant m$ be an integer. We claim, without proof, that a circuit for $U_{f, d}$ can be constructed (with the same garbage phases as in \Cref{theorem:qubo_encoder_no_ancillas}) using $\( \Lambda - 1 \) d$ ancillas, where the $R_Z$-depth in general is $\left\lceil \tfrac{m}{\Lambda} \right\rceil$ (assuming that $\Lambda d$ many $R_Z$ gates can be run in parallel) and the total depth is $O \( \tfrac{m}{\Lambda} \max \( \left\{ 1, \log_2 \( \tfrac{\Lambda}{n} \) \right\} \) + \log_2 \( d \) \)$. We only present the construction in detail in the most extreme case, when $m$ is maximal, that is, when $m = 1 + n + \binom{n}{2} = \tfrac{n^2 + n + 2}{2}$. We also focus on the shallowest case, which is also the case with the most ancillas, that is $\Lambda = m$. Versions for more sparse functions and different values of $\Lambda$ can then easily be constructed and we address this issue after our proof for the $m = \Lambda = \tfrac{n^2 + n + 2}{2}$ case, when the $R_Z$ depth is $1$ and the total depth is $O \( \log_2 (n) + \log_2 \( d \) \)$.

\smallskip

Let us assume now that $m = \Lambda = \tfrac{n^2 + n + 2}{2}$ and start with a quantum circuit on $n + m d$ qubits. Our goal is still the realization of $\tilde{U}_{f, d}$, as in \cref{eq:f_phaser}, on the first $n + d$ qubits. We assume that the last $\( m - 1 \) d$ qubits are all initialized in zero. Let $U_{\mathrm{QUBO}}$ be defined as 
\begin{multline}
    U_{\mathrm{QUBO}} \ket{x} \ket{y}_d \ket{0}_{\( m - 1 \) d} \eqdef \\
    \ket{x} \ket{y}_d \bigotimes\limits_{j = 0}^{n - 1} \bigotimes\limits_{a = 0}^{d - 1} \( \ket{y_a \oplus x_j} \bigotimes\limits_{k = j + 1}^{n - 1} \ket{y_a \oplus x_j \oplus x_k} \).
\end{multline}
Note that $U_{\mathrm{QUBO}}$ is independent of $f$.

\begin{theorem}
    \label{theorem:QUBO_preparator}
    $U_{\mathrm{QUBO}}$ can be constructed using only CNOT gates with depth $O \( \log_2 (n) + \log_2 \( d \) \)$.
\end{theorem}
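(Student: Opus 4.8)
The plan is to exploit the fact that $U_{\mathrm{QUBO}}$ is an $\F_2$-linear reversible map: every ancilla output is an XOR of input bits while the inputs $\ket{x}$ and $\ket{y}_d$ are returned unchanged, so $U_{\mathrm{QUBO}}$ is automatically realizable by \emph{some} CNOT circuit and the entire content of the theorem is the depth bound. The only tool I would need is the standard fanout (``copy'') primitive: a bit held on a single qubit can be XORed into $N$ zero-initialized qubits, leaving the source intact, by a binary-doubling tree of CNOTs of depth $\lceil \log_2 (N+1) \rceil$ (at each layer the number of qubits holding the bit doubles). I would also use its variant in which the $N$ targets are not reset but merely receive the bit added to their current content, which is legitimate provided the tree is driven only by qubits that hold a \emph{clean} copy of that bit.

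Next I would split the $(m-1)d$ ancillas into the $d$ slices indexed by $a$ and build the values in phases keyed to a coordinate that is distinct across the simultaneous sub-fanouts. In the first phase I fan out each $x_j$ into exactly those cells whose label has $j$ as its (unique) smaller $x$-index, namely the singles $y_a \oplus x_j$ and the pairs $y_a \oplus x_j \oplus x_k$ with $k > j$, over all $a$; since every cell has a unique smaller index, these target sets are pairwise disjoint in $j$, so all $n$ fanouts run at once. Each tree has at most $nd$ leaves, giving depth $O(\log_2 n + \log_2 d)$, and afterwards every such cell holds the pure bit $x_j$. A symmetric phase must then add the larger-index variable $x_k$ to each pair-cell, and a final phase adds $y_a$ to every cell of slice $a$; the last is again a family of disjoint fanouts, one per $a$, into $n + \binom{n}{2}$ cells, of depth $O(\log_2 n)$. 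Counting copies reproduces the advertised figures: $x_j$ is needed in $nd$ cells, which forces the $O(\log_2 n + \log_2 d)$ term, whereas $y_a$ is needed in $O(n^2)$ cells, contributing only $O(\log_2 n)$.

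The hard part will be the middle phase: depositing the second variable into cells that are already occupied, equivalently computing all pairwise sums $x_j \oplus x_k$ in place with no spare ancillas. A cell already holding $x_j$ cannot act as a clean relay for $x_k$, since a CNOT copies the whole control value and would reintroduce $x_j$; the only pure copies of $x_k$ available are the original qubit, the single-cells that received $x_k$ in the first phase, and the pair-cells whose smaller index is $k$, the latter holding $x_k$ only transiently until they themselves acquire their larger-index variable. Relaying through these works but imposes an ordering constraint—any cell used as a pure-$x_k$ relay must be consumed before it is overwritten—and a naive left-to-right schedule degrades to depth $O(n)$.

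I would resolve this with a balanced divide-and-conquer on the variable index: split $\{0, \dots, n-1\}$ into two halves, recurse on the within-half pair-cells, and treat the cross pairs as a bipartite deposit in which each half furnishes pure relays for the other. Because the cross values depend only on the inputs and not on any recursively computed quantity, the cross-deposits at different scales address disjoint cells and can in principle be interleaved; arranging the recursion so that no cell is read concurrently with being written elsewhere is exactly what preserves both garbage-freeness and the $O(\log_2 n + \log_2 d)$ depth. Verifying that a single schedule is simultaneously conflict-free (no qubit appears in two gates of one layer), garbage-free (every ancilla ends in its prescribed value and every input is restored), and of the claimed depth is the main technical obstacle, and is where I expect the real work to lie.
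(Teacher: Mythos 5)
Your three-way decomposition of the work (cells keyed by the smaller $x$-index, by the larger $x$-index, and by the $y$-slice) is exactly the paper's, but you have handicapped yourself with the wrong fanout primitive, and this is what creates --- rather than solves --- your ``hard part.'' The tool the paper uses (citing \cite{low_trading_2024}*{Appendix~B.1}) is a multi-target CNOT that works on targets in \emph{arbitrary} states, not only zero-initialized ones: conjugate a single CNOT from the control by a CNOT difference-tree on the targets (pass to pairwise differences in depth $\approx \log_2 N$, XOR the control into the one remaining coordinate, undo the tree). This costs $\approx 2N$ CNOTs, has depth $\approx 2\log_2 N$, and needs no clean relays whatsoever. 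With it, your middle phase is immediate: the cells needing $x_k$ as their larger-index bit are pairwise disjoint across $k$, so all $n$ of these fanouts run in parallel in depth $O(\log_2 (nd))$, occupied targets notwithstanding; the same applies to your final phase. The paper's proof is precisely three such rounds (it does the $y_a$'s first, then the smaller-index $x_j$'s, then the larger-index ones), giving total depth $\lesssim 8 \log_2 (n) + O(\log_2 (d))$, with no divide-and-conquer and no scheduling argument.

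Under your own primitive (doubling trees into zero qubits, plus deposits driven by clean copies), the gap is not merely ``technical work left to do'': the approach cannot reach logarithmic depth. Count relays. After your first phase every ancilla is occupied, so no new clean copy of any bit can ever be created --- a CNOT from a clean copy into an occupied cell produces garbage, and there are no zero qubits left to double into. For $k = n-1$ there are $\Theta(nd)$ pair-cells that need $x_{n-1}$, but the only clean copies in existence are the original qubit and the $d$ single-cells, i.e.\ $O(d)$ sources, each able to deposit into at most one cell per layer; this forces depth $\Omega(n)$ for any schedule that does not first undo part of phase 1, and your recursion cannot evade the bound since for the largest index no within-half relays exist at all. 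Worse, your final phase is wrong as stated: the $m-1$ cells of slice $a$ are all occupied by then and the sole clean copy of $y_a$ is the original qubit, so your primitive yields depth $\Omega(n^2)$ there, not the claimed $O(\log_2 n)$. The fix is not a cleverer schedule; it is the stronger fanout gate above.
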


\begin{proof}
    Note that each $y_a$ needs to be added to $m$ qubits and no two of them need to be added to the same one, thus, using again the fanout gates of \cite{low_trading_2024}*{Appendix~B.1}, this can be achieved using $\sim 2 m d$ CNOT gates with a depth of $\sim 2 \log_2 \( m \) \lesssim 4 \log_2 (n)$. After this, each $x_j$ needs to be added to $n d$ terms, albeit some of them overlap, but we can still achieve this with two parallel sets of fanout gates; first (for each $j$) let $\ket{x_j}$ control a fanout gate that targets the qubits we want to end up in the state either $\ket{y_a \oplus x_j}$ or $\ket{y_a \oplus x_j \oplus x_k}$, where $j < k$ and in the second set let $\ket{x_j}$ control a fanout gate that targets the qubits we want to end up in the state $\ket{y_a \oplus x_j \oplus x_k}$, where $j > k$. In each round the fanout gates have no overlapping controls or targets, thus can be run parallel, and the maximum depth in each is $\approx 2 \log_2 (n)$. Thus the total depth is $ \lesssim \( 4 + 2 + 2 \) \log_2 (n) = 8 \log_2 (n)$.
\end{proof}

\smallskip

Now let
\begin{align}
    \P_f \eqdef & \id \otimes \cP_d \( q_\emptyset \) \bigotimes\limits_{j = 0}^{n - 1} \( \cP_d \( - \tfrac{q_i}{2} \) \bigotimes\limits_{k = j + 1}^{n - 1} \cP_d \( \tfrac{Q_{jk}}{2} \) \).
\end{align}
Note that $\P_f$ contains the exact same number and type of $R_Z$ rotations as the design in \Cref{sec:qubo_encoder}, but now they all run in parallel.

\begin{lemma}
    Let $U_{f, d} \eqdef U_{\mathrm{QUBO}} \circ \P_f \circ U_{\mathrm{QUBO}}$. Then for each $x \in \F_2^n$ and $y \in F_2^d$, we have that
    \begin{equation}
        U_{f, d} \ket{x} \ket{y}_d \ket{0}_{\( m - 1 \) d} = e^{i \alpha_{f, x, y}} \ket{x} \ket{y + f(x)}_d \ket{0}_{\( m - 1 \) d}
    \end{equation}
    where $\alpha_{f, x, y}$ is the same as in \Cref{theorem:qubo_encoder_no_ancillas}.
\end{lemma}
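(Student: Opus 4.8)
The plan is to use two structural facts: $\P_f$ is diagonal in the computational basis, while $U_{\mathrm{QUBO}}$, being a CNOT circuit, is a unitary permutation matrix that squares to the identity. First I would record the action of $U_{\mathrm{QUBO}}$ on an arbitrary computational basis state,
\[
    \ket{x}\ket{y}_d\ket{w}\longmapsto\ket{x}\ket{y}_d\ket{w\oplus g(x,y)},
\]
where $g(x,y)$ is the fixed bit pattern whose blocks are the strings $y_a\oplus x_j$ and $y_a\oplus x_j\oplus x_k$, and where the control registers $\ket{x}$ and $\ket{y}_d$ are never touched. Because every ancilla bit is XORed with a function of these unchanged controls, applying the circuit twice cancels, so $U_{\mathrm{QUBO}}^2=\id$ and hence $U_{\mathrm{QUBO}}=U_{\mathrm{QUBO}}^\dagger$. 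Consequently $U_{\mathrm{QUBO}}\circ\P_f\circ U_{\mathrm{QUBO}}$ is the conjugation of the diagonal $\P_f$ by the permutation $U_{\mathrm{QUBO}}$, so on $\ket{x}\ket{y}_d\ket{0}$ it returns $\lambda\,\ket{x}\ket{y}_d\ket{0}$, where $\lambda$ is the eigenvalue that $\P_f$ assigns to $U_{\mathrm{QUBO}}\ket{x}\ket{y}_d\ket{0}$. This already delivers the ancilla uncomputation and the fact that $\ket{x}$ is untouched, for free.

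Second I would evaluate $\lambda$. The state $U_{\mathrm{QUBO}}\ket{x}\ket{y}_d\ket{0}$ is a tensor product of value registers in computational-basis states: the main register holds $y$, the $j$-th ancilla holds the integer $y\oplus x_j\mathbf 1 = y + x_j(2^d-1-2y)$ (here $\mathbf 1$ is the all-ones string), and the $(j,k)$-th ancilla holds $y\oplus(x_j\oplus x_k)\mathbf 1$. Applying $\cP_d(k)\ket{z}_d=e^{\frac{2\pi i}{2^d}kz}\ket{z}_d$ register by register, the block structure of $\P_f$ gives $\lambda=\exp\!\big(\tfrac{2\pi i}{2^d}\varphi_{x,y}\big)$, where $\varphi_{x,y}$ is the sum of the contribution $q_\emptyset y$ from the main register, the contributions $-\tfrac{q_j}{2}\,(y\oplus x_j\mathbf 1)$ from the single-index ancillas, and the contributions $\tfrac{Q_{jk}}{2}\,(y\oplus(x_j\oplus x_k)\mathbf 1)$ from the pair ancillas.

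The main point, which avoids any fresh computation, is that this is term by term the same phase as in the ancilla-free design of \Cref{sec:qubo_encoder}. In $\tilde U_{f,d}$ the block $\tilde U_j$ conjugates $\cP_d(-\tfrac{q_j}{2})$ by the controlled $X^{\otimes d}$: this momentarily sends $\ket{y}_d$ to $\ket{y\oplus x_j\mathbf 1}_d$, so it contributes exactly $e^{\frac{2\pi i}{2^d}(-q_j/2)(y\oplus x_j\mathbf 1)}$ before restoring $\ket{y}_d$; the ancilla design instead writes a fresh copy $\ket{y\oplus x_j\mathbf 1}_d$ into a dedicated register and applies $\cP_d(-\tfrac{q_j}{2})$ there, producing the identical scalar. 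The same matching holds for $\tilde U_{jk}$ (against \cref{eq:alpha_jky}) and for $\tilde U_\emptyset$. Since these factors are global scalars, redistributing them onto disjoint registers---which is precisely what buys the parallelism and the reduced depth---does not change their product. Hence the $\varphi_{x,y}$ above is identical to the one in the proof of \Cref{theorem:qubo_encoder_no_ancillas}, and collapses, via $q_j=\sum_k Q_{jk}$, $q_\emptyset=f(0)+\tfrac14\mathrm{tr}(Q)+\tfrac14\mathrm{sum}(Q)$, $\sum_{j<k}Q_{jk}=\tfrac12(\mathrm{sum}(Q)-\mathrm{tr}(Q))$, and \cref{eq:f_rewrite}, to $\varphi_{x,y}=f(x)\,y-\tfrac{2^d-1}{2}(f(x)-f(0))$.

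It remains to pass from this phaser to the adder exactly as in \Cref{theorem:qubo_encoder_no_ancillas}: the two steps above show that $U_{\mathrm{QUBO}}\circ\P_f\circ U_{\mathrm{QUBO}}$ agrees with $\tilde U_{f,d}\otimes\id$ on all inputs whose ancillas start in $\ket{0}$, so conjugating the value register by $\QFT_d$ as in \cref{eq:quantum_adder} turns the diagonal phase $e^{\frac{2\pi i}{2^d}f(x)y}$ into the shift $\ket{y}_d\mapsto\ket{y+f(x)}_d$ and isolates the garbage phase $\alpha_{f,x,y}=\tfrac{\pi(1-2^d)}{2^d}(f(x)-f(0))$, which is the claimed identity. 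I expect the only genuine obstacle to be the bookkeeping in the middle two steps---in particular the integer identity $y\oplus b\mathbf 1 = y + b(2^d-1-2y)$ for $b\in\{0,1\}$ and checking that the distributed per-register phases reassemble into precisely $\varphi_{x,y}$; once that is in hand, the involution argument and the $\QFT_d$ conjugation are routine given \Cref{theorem:qubo_encoder_no_ancillas}.
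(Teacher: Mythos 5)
Your proof is correct and fills in exactly what the paper's one-line proof (``Analogous to the proof of \Cref{theorem:qubo_encoder_no_ancillas}'') leaves implicit: the compute--phase--uncompute sandwich is diagonal because $U_{\mathrm{QUBO}}$ is a self-inverse CNOT permutation and $\P_f$ is diagonal, the per-register phases reassemble (via $y \oplus b\mathbf{1} = y + b(2^d - 1 - 2y)$) into the same $\varphi_{x,y} = f(x)\,y - \tfrac{2^d-1}{2}(f(x)-f(0))$ as in the ancilla-free design, and the $\QFT_d$ conjugation then converts the phaser into the adder with the stated garbage phase. One point in your favor worth making explicit: the lemma as literally printed defines $U_{f,d} \eqdef U_{\mathrm{QUBO}} \circ \P_f \circ U_{\mathrm{QUBO}}$ with no Fourier transforms, and such an operator, being a permutation conjugate of a diagonal operator, is itself diagonal and so cannot effect the shift $\ket{y}_d \mapsto \ket{y+f(x)}_d$; your final step silently supplies the missing conjugation $\( \id \otimes \QFT_d^\dagger \otimes \id \) \circ U_{\mathrm{QUBO}} \circ \P_f \circ U_{\mathrm{QUBO}} \circ \( \id \otimes \QFT_d \otimes \id \)$, which is clearly what the authors intend (it parallels \cref{eq:proj_enc}), so you proved the corrected statement rather than the defective literal one.
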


\begin{proof}
    Analogous to the proof of \Cref{theorem:qubo_encoder_no_ancillas}.
\end{proof}

\smallskip

\begin{remark}
    Note that the garbage phases can be also eliminated by applying phase gates with angles $\tfrac{2 \pi}{2^d} q_i$ to the qubits $\ket{x_j}$ and controlled phase gates with angles $- \tfrac{2 \pi}{2^d} Q_{jk}$ to the qubits $\ket{x_j} \ket{x_k}$ (for all relevant values of $j, k$). However, for our current purposes a projective encoder is already sufficient, as explained in \Cref{sec:qdicts_and_qubos}.
\end{remark}

\bigskip

\section{Fixed-point Grover search for QUBO}
\label{sec:grover_for_qubo}

Fixed-point Grover Search (FPGS) \cite{yoder_fixed_2014} is a variant of Grover's search algorithm that retains the original version's query complexity while not suffering from the souffl\'e problem (more on this below). In this section we introduce the algorithm of Yoder et al. and use our oracle design from \Cref{sec:qubo_encoder} to implement FPGS for QUBO problems. Furthermore, in \Cref{sec:parameter_optimization_for_FPGAS}, we study the problem of optimal parameter choice for FPGS and, in \Cref{sec:time}, we analyze the time-complexity of FPGS for QUBO problems using our oracle design.

\smallskip

As opposed to Grover's algorithm, where the only input is the set of \textit{good}/\textit{marked}/\textit{target} configurations, $\cM \subset \F_2^n$, the FPGS algorithm requires an additional one, that can be chosen to be either the target probability or the query complexity. We assume that the target probability, $\P_{\mathrm{target}} \in \( 0, 1 \)$, is given, that is, we want the algorithm to produce a state that, when measured in the computational basis, yields a marked state with probability at least $\P_{\mathrm{target}}$.

The price of this flexibility (and of the elimination of half of the souffl\'e problem) is that one needs to implement not only a pair of oracles, but two families of them, call $S_s \( \alpha \)$ and $S_t \( \beta \)$ (where the subscripts refer to the \textit{start} and \textit{target} states, and $\alpha, \beta$ are real parameters), with the following properties: Let
\begin{align}
    \ket{s} &\eqdef \tfrac{1}{\sqrt{2^n}} \sum\limits_{x \in \F_2^n} \ket{x}, \\
    \ket{t} &\eqdef \tfrac{1}{\sqrt{|\cM|}} \sum\limits_{x \in \cM} \ket{x},
\end{align}
let $\lambda \eqdef \left| \braket{s|t} \right|^2 = \tfrac{|\cM|}{2^n} \in \( 0, 1 \)$, and
\begin{equation}
    \ket{\bar{s}} \eqdef \tfrac{\sqrt{\lambda} \ket{s} - \ket{t}}{\sqrt{1 + \lambda}} \: \& \: \ket{\bar{t}} \eqdef \tfrac{- \ket{s} + \sqrt{\lambda} \ket{t}}{\sqrt{1 + \lambda}}.
\end{equation}
Then both $\left\{ \ket{s}, \ket{\bar{s}} \right\}$ and $\left\{ \ket{t}, \ket{\bar{t}} \right\}$ are orthonormal bases of the same complex plane. Now we need $S_s \( \alpha \)$ and $S_t \( \beta \)$ to be unitaries such that on this plane their actions are 
\begin{align}
    S_s \( \alpha \) \( A \ket{s} + B \ket{\overline{s}} \) &= e^{i \alpha} A \ket{s} + B \ket{\overline{s}} \quad \\
                                                            & \: \& \\
    S_t \( \beta \) \( D \ket{t} + C \ket{\overline{t}} \)  &= e^{i \beta} C \ket{t} + D \ket{\overline{t}}.
\end{align}
Let $G \( \alpha, \beta \) \eqdef S_s \( \alpha \) S_t \( \beta \)$.

Once in possession of such oracles and a target probability $\P_{\mathrm{target}} \in \( 0, 1 \)$, the FPGS of \cite{yoder_fixed_2014} can be summarized as follows: for each $\mu \in \( 0, \lambda \right]$ and $\delta \in \( 0, \sqrt{1 - \P_{\mathrm{target}}} \right]$, there is $l = l \( \mu, \delta \) \in \Z_+$, such that if, for all $j \in \{ 1, 2, \ldots, l \}$, we set
\begin{align}
    \alpha_j    &\eqdef 2 \: \mathrm{arccot} \( \tan \( \tfrac{2 \pi j}{2 l + 1} \) \tanh \( \tfrac{\mathrm{arccosh} \( \nicefrac{1}{\delta} \)}{2 l + 1} \) \), \\
    G_j         &\eqdef S_s \( \alpha_j \) S_t \( \alpha_{l + 1 - j} \),
\end{align}
then
\begin{equation}
    \P_{\mathrm{success}} \eqdef \left| \left\langle t \middle| G_l G_{l - 1} \cdots G_1 \middle| s \right\rangle \right|^2 \geqslant \P_{\mathrm{target}}.
\end{equation}
In fact, $l$ and $\P_{\mathrm{success}}$ can be computed via:
\begin{align}
    l                       &= \min \( \left\{ \: l^\prime \: \middle| \: \sqrt{1 - \lambda} T_{\nicefrac{1}{2 l + 1}} \( \nicefrac{1}{\delta} \) > 1 \: \right\} \), \label{eq:lcrit} \\
    \P_{\mathrm{success}}   &= 1 - \( T_{2 l + 1} \( \sqrt{1 - \lambda} T_{\nicefrac{1}{2 l + 1}} \( \nicefrac{1}{\delta} \) \) \)^2 \delta^2, \label{eq:Psuccess}
\end{align}
where $T_L$ is the Chebyshev polynomial of the first kind. Now
\begin{equation}
    \P_{\mathrm{success}} \geqslant 1 - \delta^2 \geqslant \P_{\mathrm{target}}.
\end{equation}
The proof of the above statements can be found in \cite{yoder_fixed_2014}.

\smallskip

\begin{remark}
    In \cite{grover_fixed_2005} Grover also introduced a ``fixed-point'' version of his original method, however this algorithm did not maintain the quadratic speedup.
\end{remark}

\begin{remark}
    The fixed-point property is crucial for our method, because the number of marked states for QUBO problems is unknown. This rules out the direct application of other known methods such as ``quantum search with certainty'' \cite{toyama_quantum_2013} or  ``nonboolean'' versions of quantum search (for example \cites{benchasattabuse_amplitude_2022,shyamsundar_nonboolean_2023}). This can be somewhat remedied by the Quantum Mean Estimation method of \cite{shyamsundar_nonboolean_2023}, it is unclear if such a combined method would not lose the computational advantages\footnote{Based on private communications with the author of \cite{shyamsundar_nonboolean_2023}.}.
    
    Furthermore, the proof of a fixed-point version of the Nonboolean Quantum Amplitude Amplification (NBQAA) method of \cite{shyamsundar_nonboolean_2023} are currently prepared for publication by the first author, which will open up the possibility of combining the advantages of NBQAA with our adaptive algorithm for binary optimization.
\end{remark}

\smallskip

We implement $S_s \( \alpha \)$ and $S_t \( \beta \)$ in a straightforward manner. If $U_s$ is the state preparation oracle, that is, $\ket{s} = U_s \ket{0}$, and $\mathrm{MCP}_n \( \alpha \)$ is the $(n - 1)$-controlled phase gate on $n$ qubits, then
\begin{equation}
     S_s \( \alpha \) = U_s \circ X^{\otimes n} \circ \mathrm{MCP}_n \( \alpha \) \circ X^{\otimes n} \circ U_s^\dagger, \label{eq:diffuser}
\end{equation}
works. Note that $\mathrm{MCP}_n$ can be implemented with one clean ancillas and two multi-controlled NOT gates via
\begin{equation}
    \begin{quantikz}
        \lstick{$\ket{x}$}  & \qwbundle{n}  & \ctrl{1}  & \qw               & \ctrl{1}  & \rstick{$\ket{x}$} \\
        \lstick{$\ket{0}$}  &               & \targ{}   & \phase{\alpha}    & \targ{}   & \rstick{$e^{i \alpha x_0 x_1 \cdots x_{n - 1}} \ket{0}$}
    \end{quantikz}
\end{equation}
In turn, the $n$-controlled NOT gates can be implemented with a depth of $O \( \log_2 (n)^3 \)$ with no ancillas \cite{claudon_polylogarithmic_2024} or at depth $O \( \log_2 (n) \)$ and a single ancilla \cite{nie_quantum_2024}.

\smallskip

In general, the implementation of $S_t \( \beta \)$ is case specific. In our method in \Cref{sec:adaptive} for QUBO problems, we first specify thresholds, $y \in \Z$, and want to mark states, $\ket{x}$, such that $f(x) > y$, that is, the set of marked states is $\cM_{f, y} \eqdef \{ x \in \F_2^n | y > f(x) \}$. We achive this as follows. First replace $f$ with $f_y (x) \eqdef y - f(x)$ and repeat one of the constructions of \Cref{sec:qubo_encoder}, to get $U_t \eqdef U_{f_y, d}$. Note that $\ket{x}$ is marked exactly when $f_y (x) < 0$, which happens exactly when the first qubit of $\ket{f_y (x)}_d$ is $\ket{1}$ (and otherwise it is $\ket{0}$). Now to to get $S_t \( \beta \)$, we apply both $U_t$ and $U_t^\dagger$ as follows:
\begin{equation}
    \begin{quantikz}
        \lstick{$\ket{x}$} \qw                      & \gate[3]{U_{f_y, d}}    & \qw                               & \gate[3]{U_{f_y, d}^\dagger}    & \qw \rstick{$\left\{ \begin{array}{ll} e^{i \beta} \ket{x}, & \mbox{if } x \in \cM, \\ \phantom{e^{i \beta}} \ket{x}, & \mbox{if } x \notin \cM, \end{array} \right.$} \\
        \lstick{$\ket{0}$} \qw                      &                       & \phase{\beta} &                               & \qw \rstick{$\ket{0}$} \\
        \lstick{$\ket{0}_{d - 1}$}  &                        &                                  &                               & \qw \rstick{$\ket{0}_{d - 1}$} \\
    \end{quantikz}
\end{equation}
Again, since $U_{f_y, d}$ is used in pairs with $U_{f, d}^\dagger$ in a way that the operator between them is diagonal, all garbage phases are uncomputed and thus a projective quantum dictionary encoder is sufficient.

\medskip

\subsection{Parameter Optimization for Fixed-point Grover Adaptive Search}
\label{sec:parameter_optimization_for_FPGAS}

Recall that in FPGS the parameter $\delta$ (or, equivalently, $\P_{\mathrm{target}}$) can be chosen arbitrarily within the interval $\( 0, 1 \)$. In this section we show how to optimize it for our adaptive search.

\smallskip

First of all, just to illustrate our ideas, let us assume that $\lambda$ is known and, thus, we choose $\mu = \lambda$. In order to run FPGS, we need to still choose $\delta$, which then determines $l \defeq l_\delta$ and $\P_{\mathrm{success}} \defeq\P_
\delta$. Since the number of FPGS runs needed to succeed follows the geometric probability distribution, the expected amount of time required to find a solution is proportional to $\tfrac{l_\delta}{P_\delta}$. Since we know that the query complexity scales as $\tfrac{1}{\sqrt{\lambda}}$, let us define $\tau_\delta \eqdef \sqrt{\lambda} \tfrac{l_\delta}{P_\delta}$, and minimize $\tau_\delta$ in the $\lambda \rightarrow 0^+$ limit. This is a nontrivial problem to solve analytically (as the formulas involve nondifferentiable ceiling functions), but we found that this optimization can be carried out numerically, yielding stable approximative values. An instance of this is given in \Cref{figure:delta_min}.

\begin{figure}[h]
    \centering
    \includegraphics[scale=.8]{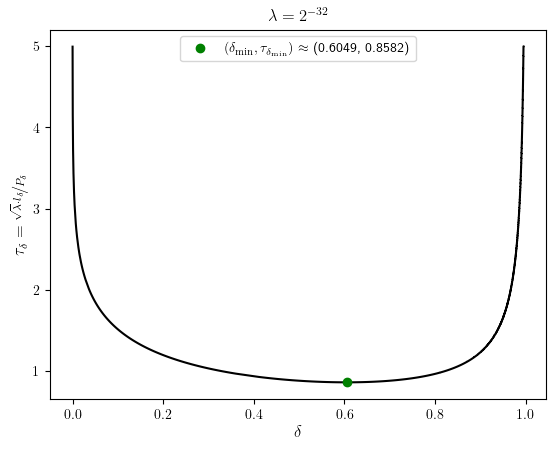}
    \caption{The $\delta$-dependence of the expected time to find a marked state.}
    \label{figure:delta_min}
\end{figure}

We find that $\delta \approx 0.6049$ minimizes the expected time and yields $\tau \approx 0.8582$. As expected, this is higher than $\tfrac{\pi}{4} \approx 0.7854$, which corresponds to Grover's algorithm. In fact, Grover's algorithm can be achieve to $\tau \approx 0.69$, as in pointed out in \cite{zalka_grover_1999}*{Section~1}. Thus, when the ratio of marked states is known, we should not use FPGS, but Grover's algorithm.

However, when $\lambda$ is not known, Grover's algorithm cannot directly be applied. This was addressed in \cite{boyer_tight_1998}: The authors propose a randomized algorithm that finds a marked state, even when $\lambda$ is unkown, and they provide an upper bound $\tau \lesssim \tfrac{9}{2} = 2.25$. Here ``randomized'' means that the number of queries are randomly chosen and $\tau$ is still defined as $\sqrt{\lambda}$ times the expected number of queries, as $\lambda \rightarrow 0^+$. This is the method used in \cite{gilliam_grover_2021}. In \cite{li_quantum_2019}, Li et al. proposed a similar method, using FPGS, that they called \textit{deterministic}, in the sense that the number of queries are chosen deterministicly (growing exponentially using a specified base) and this number surpasses the critical number of queries needed for FPGS (given in \cref{eq:lcrit}) in $O \( \tfrac{1}{\sqrt{\lambda}} \)$ time. However, they only prove $\tau \lesssim 5.643$ for their parameters. Below, we refine their argument to (numerically) achieve $\tau \approx 1.433$. Thus our method yields a deterministic method with the best known worst case complexities. The caveat is that we do not prove our bound analytically, just provide strong, numerical evidence for it down to $\lambda \geqslant 2^{- 40}$.

\begin{remark}
    We assume that queries for FPGS and Grover's algorithm cost the same, which is not always the case: to implement arbitrary phase rotations one often need two marker calls (whereas a phase $\pi$ shift only needs one) and one also need to consume more magic states in a fault-tolerant (Clifford $+$ $T$) implementation. This is not the case for our method however, as in both cases one needs two marker calls to uncompute the ancillas and the non-Clifford overhead of implementing arbitrary phases is small compared to that of the marker oracles.

    Note that often FPGS queries are assumed to cost twice as much as regular Grover queries, in which case our bound is somewhat weaker than that of \cite{boyer_tight_1998}.
\end{remark}

In order to achieve $\tau \approx 1.433$, we mimic the ideas of Li et al. \cite{li_quantum_2019}, however, we use sharper bounds. More precisely, our method can be described as follows: Fix $\delta \in \( 0, 1 \)$, $\alpha > 1$, and let FPGS$_{\delta, l}$ be FPGS with parameter $\delta$ and $l$ queries. Let $\ell = 1$ and $l_{\mathrm{total}} = 0$. In the first round run FPGS$_{\delta, \lceil \ell \rceil}$, add $\lceil \ell \rceil$ to $l_{\mathrm{total}}$, and if a marked state is found, then stop. If after each round, if no marked states have been found, then replace $\ell$ with $\alpha \ell$, run FPGS$_{\delta, \lceil \ell \rceil}$, add $\lceil \ell \rceil$ to $l_{\mathrm{total}}$, and if a marked state is found, then stop. Again, let $\tau_{\delta, \alpha}$ be the $\sqrt{\lambda}$ time the expected value of $l_{\mathrm{total}}$. We provide evidence for the following conjecture.

\begin{conjecture}
    \label{conjecture:tau}
    Let $\delta = 0.4038$ and $\alpha = 1.975$. Then $\tau \leqslant 1.434$.
\end{conjecture}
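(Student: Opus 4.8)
The plan is to reduce the computation of $\tau$ in the $\lambda\to0^+$ limit to a one-parameter optimization and then certify the bound numerically, since the ceilings in the schedule rule out a clean closed form. First I would write $\mathbb{E}[l_{\mathrm{total}}]$ exactly. Because each round re-prepares $\ket{s}$ and runs an independent FPGS with a \emph{fixed} query count, the rounds are independent Bernoulli trials: round $k$ uses $l_k\eqdef\lceil\alpha^{k-1}\rceil$ queries and succeeds with probability $p_k\eqdef\P_{\mathrm{success}}$ given by \cref{eq:Psuccess} with this $l_k$. Writing $L_k\eqdef\sum_{i=1}^k l_i$ for the cumulative count,
\[
\mathbb{E}[l_{\mathrm{total}}]=\sum_{k\ge1}\Big(\prod_{i<k}(1-p_i)\Big)\,p_k\,L_k.
\]

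Next I would pass to the limit. Set $a\eqdef\mathrm{arccosh}(1/\delta)$ and $v\eqdef(2l+1)\sqrt\lambda$. Expanding $\cosh\!\big(a/(2l+1)\big)$ and $\sqrt{1-\lambda}$ to leading order and feeding the result through $T_{2l+1}$ (using $T_{2l+1}(\cosh\theta)=\cosh((2l+1)\theta)$ above $1$ and $T_{2l+1}(\cos\psi)=\cos((2l+1)\psi)$ below it), the per-round success probability converges to the explicit profile
\[
P(v)=\begin{cases}1-\delta^2\cosh^2\!\big(\sqrt{a^2-v^2}\big), & 0\le v\le a,\\[2pt] 1-\delta^2\cos^2\!\big(\sqrt{v^2-a^2}\big), & v\ge a,\end{cases}
\]
which is continuous at the critical value $v=a$ (reproducing $\sqrt\lambda\,l_{\mathrm{crit}}\to a/2$ from \cref{eq:lcrit}) and satisfies $P(v)\ge1-\delta^2$ for $v\ge a$. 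The query counts rescale to $v_k\approx2\alpha^{k-1}\sqrt\lambda$, so $\{v_k\}$ is a geometric grid of ratio $\alpha$ whose offset is the log-periodic phase $\phi\eqdef\{\log_\alpha(2\sqrt\lambda)\}\in[0,1)$, while $\sqrt\lambda\,L_k\to\tfrac{\alpha}{2(\alpha-1)}\,v_k$. Hence $\sqrt\lambda\,\mathbb{E}[l_{\mathrm{total}}]$ converges, for each fixed phase, to
\[
\tau(\phi)=\frac{\alpha}{2(\alpha-1)}\sum_{k}\Big(\prod_{i<k}\big(1-P(v_i)\big)\Big)P(v_k)\,v_k,
\]
and the quantity to bound is $\tau=\limsup_{\lambda\to0^+}\sqrt\lambda\,\mathbb{E}[l_{\mathrm{total}}]=\sup_{\phi\in[0,1)}\tau(\phi)$.

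The series defining $\tau(\phi)$ converges precisely because $\alpha\delta^2<1$ for the stated parameters ($\alpha\delta^2\approx0.32$): past the critical index each surviving factor is at most $\delta^2$ while $v_k$ grows by $\alpha$, so the tail is geometric, and the deeply subcritical terms vanish like $v_k^3$. Thus only $O(1)$ rounds straddling $v\approx a$ contribute appreciably, $\tau(\phi)$ is evaluable to high precision by truncation, and I would maximize it over the single period $\phi\in[0,1)$ and confirm $\tau\le1.434$ for $\delta=0.4038$ and $\alpha=1.975$. This sharpening over the crude estimate comes entirely from retaining the genuine partial successes $P(v_k)$ for $v_k$ near and just below $a$, rather than treating every subcritical round as a sure failure.

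The main obstacle---and the reason only numerical evidence is claimed---is twofold. Analytically, the ceilings $\lceil\alpha^{k-1}\rceil$ together with the transcendental profile $P(v)$ make $\tau(\phi)$ non-differentiable and preclude a closed form, so the optimizing phase cannot be located by calculus. Making the limit fully rigorous would in addition require uniform error bounds controlling the gap between the finite-$\lambda$ sum and $\tau(\phi)$, together with a certified (e.g.\ interval-arithmetic or Lipschitz-based) global maximization over the continuum of phases. In lieu of this, one evaluates $\sqrt\lambda\,\mathbb{E}[l_{\mathrm{total}}]$ directly at finite $\lambda$ and verifies that the bound persists down to $\lambda\ge2^{-40}$, which is the strong numerical evidence the paper reports.
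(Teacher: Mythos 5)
Your proposal is correct, and it rests on the same skeleton as the paper's own argument: both write $\mathbb{E}[l_{\mathrm{total}}]$ exactly as a sum over rounds weighted by products of per-round failure probabilities taken from \cref{eq:Psuccess}, both exploit $\alpha\delta^2 < 1$ ($\approx 0.32$ here) to control the supercritical tail by a geometric series, and both ultimately defer to numerics---which is precisely why the statement is a conjecture in the paper as well. Where you differ is in how the $\lambda \rightarrow 0^+$ limit is organized. The paper stays at finite $\lambda$: it splits the sum at the first supercritical index $s_0$, bounds each supercritical failure factor by $\delta^2$ to get $\tau_{\delta,\alpha} \leqslant \mathbb{E}_{\lambda,\delta,\alpha} + \sqrt{\lambda}\, Q_{s_0} / (1 - \alpha\delta^2)$, and evaluates this numerically at the single value $\lambda = 2^{-40}$ over a grid of $(\delta, \alpha)$ (the phase portrait in the paper's \Cref{figure:phase_portrait}). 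You instead pass to a genuine scaling limit: with $v = (2l+1)\sqrt{\lambda}$ and $a = \mathrm{arccosh}(1/\delta)$ you derive the limiting success profile $P(v)$ (cosh-branch below the critical point $v = a$, cos-branch above; your formulas check out, including $\sqrt{\lambda}\, l_{\mathrm{crit}} \rightarrow a/2$ and the cubic decay of deeply subcritical contributions), and you observe that the rescaled expectation does not converge to a single number but to a log-periodic function of $\lambda$, so $\tau$ must be read as a supremum over the phase $\phi = \{\log_\alpha(2\sqrt{\lambda})\}$. That is a real refinement: the paper's single-$\lambda$ evaluation samples only one offset of the geometric query schedule against the critical scale, and could in principle miss the worst phase, whereas your one-period maximization addresses exactly this. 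What the paper's route buys in exchange is that its finite-$\lambda$ expression is an honest upper bound at that $\lambda$ (no limit interchange to justify), and the two-parameter sweep is what motivates the specific choice $\delta = 0.4038$, $\alpha = 1.975$ in the first place. Neither route proves the bound; your formulation is arguably the cleaner starting point for a rigorous proof (certified maximization over one period of $\phi$ plus uniform-in-$\lambda$ error bounds), as you yourself note.
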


\begin{proof}[Numerical evidence for \Cref{conjecture:tau}:]
    We can write
    \begin{align}
        \tau_{\delta, \alpha}   &= \sqrt{\lambda} \mathbb{E} \left[ l_{\mathrm{total}} \right] \\
                                &= \sqrt{\lambda} \sum\limits_{s = 1}^\infty \( \sum\limits_{r = 1}^s \lceil \alpha^{r - 1} \rceil \) \mathbb{P} \( \mbox{FPGS} _{\delta, \lceil \alpha^{s - 1} \rceil} \mbox{ succeeds first} \) \\
                                &= \sqrt{\lambda} \sum\limits_{s = 1}^\infty \lceil \alpha^{s - 1} \rceil \sum\limits_{r = s}^\infty \mathbb{P} \( \mbox{FPGS} _{\delta, \lceil \alpha^{r - 1} \rceil} \mbox{ succeeds first} \) \\
                                &= \sqrt{\lambda} \sum\limits_{s = 1}^\infty \lceil \alpha^{s - 1} \rceil \displaystyle\prod_{r = 1}^{s - 1} \mathbb{P} \( \mbox{FPGS} _{\delta, \lceil \alpha^{r - 1} \rceil} \mbox{ fails} \).
    \end{align}
    Let $l_s \eqdef \lceil \alpha^{s - 1} \rceil$. Then by \cref{eq:Psuccess} we have
    \begin{equation}
        \mathbb{P} \( \mbox{FPGS} _{\delta, l_s} \mbox{ fails} \) = \( T_{2 l_s + 1} \( \sqrt{1 - \lambda} T_{\nicefrac{1}{2 l_s + 1}} \( \nicefrac{1}{\delta} \) \) \)^2 \delta^2.
    \end{equation}
    Let
    \begin{align}
        T_{s, \delta}   &\eqdef \( 2 l_s + 1 \) \( \sqrt{1 - \lambda} T_{\nicefrac{1}{2 l_s + 1}} \( \nicefrac{1}{\delta} \) \) \\
                        &= \( 2 l_s + 1 \) \( \sqrt{1 - \lambda} \cosh \( \tfrac{\mathrm{arccosh} \( \nicefrac{1}{\delta} \)}{2 l_s + 1} \) \).
    \end{align}    
    When $l_s < l_{\mathrm{crit}}$, then this yields
    \begin{equation}
        \mathbb{P} \( \mbox{FPGS} _{\delta, l_s} \mbox{ fails} \) = \cosh^2 \( \mathrm{arccosh} \( T_{s, \delta} \) \)^2 \delta^2,
    \end{equation}
    otherwise
    \begin{align}
        \mathbb{P} \( \mbox{FPGS} _{\delta, l_s} \mbox{ fails} \)   &\leqslant \mathbb{P} \( \mbox{FPGS} _{\delta, l_{s - 1}} \mbox{ fails} \) \delta^2 \\
                                                                    &\leqslant \mathbb{P} \( \mbox{FPGS} _{\delta, l_{s_0}} \mbox{ fails} \) \delta^{2 (s - s_0)},
    \end{align}
    where $s_0$ is the smallest integer such that $l_{s_0} \geqslant l_{\mathrm{crit}}$. Let
    \begin{equation}
        Q_s \eqdef \displaystyle\prod_{r = 1}^{s - 1} \mathbb{P} \( \mbox{FPGS} _{\delta, \lceil \alpha^{r - 1} \rceil} \mbox{ fails} \).
    \end{equation}
    Thus we get, as long as $\alpha \delta^2 < 1$, that
    \begin{align}
        \tau_{\delta, \alpha}   &= \sqrt{\lambda} \sum\limits_{s = 1}^{s_0 - 1} \lceil \alpha^{s - 1} \rceil \displaystyle\prod_{r = 1}^{s - 1} \mathbb{P} \( \mbox{FPGS} _{\delta, \lceil \alpha^{r - 1} \rceil} \mbox{ fails} \) + \\
                                & \quad + \sqrt{\lambda} \sum\limits_{s = s_0}^\infty \lceil \alpha^{s - 1} \rceil \displaystyle\prod_{r = 1}^{s - 1} \mathbb{P} \( \mbox{FPGS} _{\delta, \lceil \alpha^{r - 1} \rceil} \mbox{ fails} \) \\
                                &\leqslant \overbrace{\sqrt{\lambda} \sum\limits_{s = 1}^{s_0 - 1} \lceil \alpha^{s - 1} \rceil \displaystyle\prod_{r = 1}^{s - 1} \( \cosh^2 \( \mathrm{arccosh} \( T_{s, \delta} \) \)^2 \delta^2 \)}^{\mathbb{E}_{\lambda, \delta, \alpha} \eqdef} + \\
                                & \quad + \sqrt{\lambda} Q_{s_0} \alpha^{s_0 - 1} \sum\limits_{s = s_0}^\infty \alpha^{s - s_0} \delta^{2 (s - s_0)} \\
                                &= \mathbb{E}_{\lambda, \delta, \alpha} + \tfrac{\sqrt{\lambda} Q_{s_0}}{1 - \alpha \delta^2}.
    \end{align}
    In the range where $\alpha \delta^2 < 1$, $\mathbb{E}_{\lambda, \delta, \alpha}$ is bounded, but it is hard to analytically prove the convergence of $\tau_{\delta, \alpha}$ as $\lambda \rightarrow 0^+$. Nevertheless, numerical simulations suggest that for $\delta = 0.4038$ and $\alpha = 1.975$, $\tau_{\delta, \alpha}$ converges to $\approx 1.433$. The results of a numerical evaluation of the above upper bound for a wide range of parameters is shown in \Cref{figure:phase_portrait}.
    \begin{figure}[h]
        \centering
        \includegraphics[scale=.5]{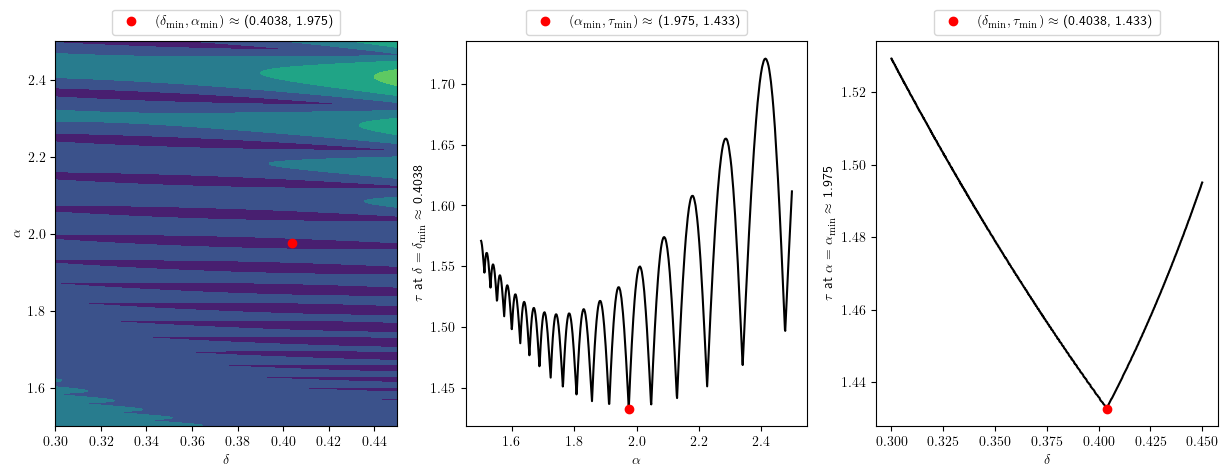}
        \caption{The phase portrait of $\tau_{\delta, \alpha}$ at $\lambda = 2^{- 40}$.}
        \label{figure:phase_portrait}
    \end{figure}
\end{proof}

\smallskip

\begin{remark}
    Note that while the value of the optimal $\delta$ appears sharp on \Cref{figure:phase_portrait}, the graph of the $\alpha$-$\tau$ dependence has many local minima and, furthermore, those minima correspond to lower $\alpha$'s but similar $\tau$'s. Thus, when the depth of the circuits is a concern, which will probably be the case for early fault-tolerant quantum computers, then such parameter values might be more beneficial.
\end{remark}

\smallskip

Python code implementing the above computation can be found in the repository complementing this paper: \cite{github_fpgs_for_qubo}.

\bigskip

\subsection{Time complexity}
\label{sec:time}

In this section, we analyze the total time complexity of the FPGS for QUBO, using our oracle design and assuming \Cref{conjecture:tau}.

\smallskip

The expected time complexity of the Fixed-point Grover Adaptive Search in \Cref{sec:parameter_optimization_for_FPGAS} depends on three quantities:
\begin{enumerate}

    \item The ratio $\lambda$ is marked states to all states.
     
    \item The circuit depth of the diffusion operator, $S_s \( \alpha \)$. Let us call this complexity $C_s$.
     
    \item The circuit depth of the operator, $S_t \( \beta \)$. Let us call this complexity $C_t$.

\end{enumerate}
The time complexity, assuming \Cref{conjecture:tau}, then is $\approx \tfrac{1.433}{\sqrt{\lambda}} \( C_s + C_t \)$.

Now let us assume that a QUBO problem is given with $S_s \( \alpha \)$ and $S_t \( \beta \)$ as defined in \Cref{sec:grover_for_qubo}.

The complexity of the diffusion operator is $C_s = O \( \log_2 (n) \)$.

The complexity, $C_t$, depends on the number of ancillas used. In the case of the ancilla-free oracle, straightforward computation shows that the depth of $U_{f, d}$ is $O  \( \( m + d \) \log_2 \( d \) \)$, where $m = O \( n^2 \)$ is the number of nonzero terms in \cref{eq:f_rewrite}. Since in our applications the ancillas start and end in the all zero state, the inital QFT and the terminal Inverse QFT can be replaced by $d$ many parallel Hadamard gates (the ``Quantum Walsh--Hadamard Transform'') which is much shallower and easier-to-implement. However the QFT and its inverse in the middle of $U_t$ need to be implemented, which can be done with approximately with a $T$ count of $O \( d \log_2 \( d \) \)$, and $T$-depth of $O \( d \)$; cf. \cite{nam_approximate_2020}. Beyond these two subroutines, the oracle uses only CNOT gates and $R_Z$ rotation. In fault-tolerant implementations further $T$ gates come the latter. The absolute values of the angles of the $R_Z$ gates are at least $\tfrac{\pi q_f}{2^{d - 1}}$, where $q_f$ is the smallest of the absolutes value of the nonzero coefficients appearing in the $\cP_d$ gates above. Thus, fault-tolerant implementations of such gates require a $T$-depth (and count) of $O \( d + \log_2 \( \tfrac{1}{q_f} \) \)$; cf. \cite{gheorghiu_t_2022}. Finally, when ancillas are allowed, the non-Clifford cost is identical, the Clifford cost only grows by an $O(1)$ factor, and the depth is given in \Cref{sec:ancillas}. Thus, $C_t$ ranges from $O \( \log_2 \( n d \) \)$ to $O \( \log_2 \( d \) n^2 \)$.

Put the above together, we get that the best total time complexity (as long as $d = O (n)$) is $O \( \tfrac{\log_2 (n)}{\sqrt{\lambda}} \)$. Note that this beats random guessing, which is $O \( \tfrac{1}{\lambda} \)$, as long as $\lambda \ll \tfrac{1}{\log_2 (n)^2}$.

\smallskip

\begin{remark}
    \label{remark:graphcuts}
    In the case when $f$ is the cut function of a simple, unoriented, and undirected graph, then $n$ is the number vertices, $m$ is the number of edges, $Q$ is the graph Laplacian, and $d$ can be chosen to be $O \( \log_2 (n) \)$. Thus, in the shallowest case, one query in the FPGS for maximum graph cuts can be implemented on $n + O \( m \log (n) \)$ qubits, with a depth of $O \( \log_2 (n) \)$.
     
    The Edward--Erd\H{o}s bound for maximum cuts is $\tfrac{m}{2} + \tfrac{n - 1}{4}$ and the best known classical method \cite{crowston_max_2015} for finding such cuts have time complexity $O \( n^4 \)$. It is an interesting and nontrivial question whether FPGS could yield a faster algorithm. In particular, if the ratio of cuts above the Edward--Erd\H{o}s bound among all, $2^n$ many, cuts is $\lambda \in \( 0, 1 \)$, then for graphs that satisfy
    \begin{equation}
        \tfrac{\log_2 (n)^2}{n^8} = o \( \lambda \),
    \end{equation}
    the FPGS with our marker oracle design outperforms the classical method in \cite{crowston_max_2015}. Further investigation of this question is however beyond the scope of this paper.
\end{remark}

\bigskip

\section{Fixed-point Grover Adaptive Search for QUBO}
\label{sec:adaptive}

In this section, we define our classical-quantum method for QUBO problems, where the classical part is an adaptive method that assigns the parameters to the quantum search. This is done by building on ideas of \cites{gilliam_grover_2021,li_quantum_2019}, with further improvements coming from \Cref{sec:grover_for_qubo}. The full method is summarized in \Cref{figure:flowchart}.

\smallskip

Let us assume that we are given an $n$-dimensional instance of a QUBO problem, $f$, so that its values are representable on $d = O (n)$ bit numbers. Assume also that we have a ``stopping condition'' described by a function
\begin{equation}
    \rl_+ \times \Z \ni \( t, th \) \mapsto \textsc{stop}_f \( t, th \) \in \left\{ \mathrm{yes}, \mathrm{no} \right\},
\end{equation}
where $t$ is the elapsed time and $th$ is the highest value found at time $t$. Our goal is to find the highest value parameter before the stopping condition triggers.

Let $\mathrm{G} \( th, l \)$ be the FPGS for QUBO circuit with $\delta = 0.4038$, threshold $th$, and $l$ queries. With the above arguments in mind, we propose an adaptive, quantum--classical hybrid method, which we call the \textit{Fixed-point Grover Adaptive Search for QUBO}, as follows:
\begin{itemize}
    \item[\textbf{Start}] Choose $x \in \F_2^n$ at random and set the threshold to $th = f \( x \)$. Set $l = 1$.
    \item[\textbf{Search}] While the stopping condition is not satisfied, search for large values of $f$ using FPGS for QUBO with $\delta = 0.4038$. After each search, update $l$ to $1.975 \cdot l$ and, if a large value is found, update $x$ to the corresponding configuration.
    \item[\textbf{Stop}] When the stopping condition is satisfied, output the configuration corresponding to the largest observed value.
\end{itemize}
The flowchart of the Fixed-point Grover Adaptive Search for QUBO is shown in \Cref{figure:flowchart} below.
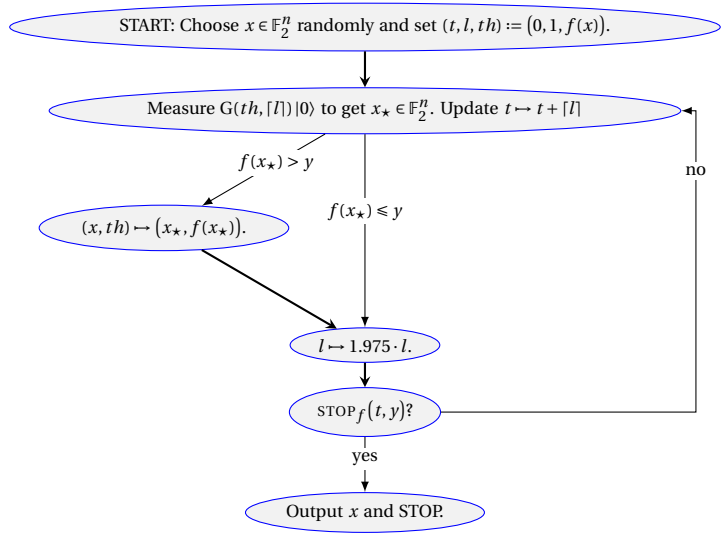
\begin{figure}[!ht]
    \centering
    \hfill\\
    {\scriptsize \begin{tikzpicture}[node distance=2em, scale=0.8, every node/.style={transform shape}]

    \node[process] (start) {START: Choose $x \in \F_2^n$ randomly and set $\( t, l, th \) \eqdef \( 0, 1, f \( x \) \)$.};
    \node[process, below of=start, yshift= -3em] (FPGS) {Measure $\mathrm{G} \( th, \lceil l \rceil \) \ket{0}$ to get $x_\star \in \F_2^n$. Update $t \mapsto t + \lceil l \rceil$};
    \node[process, below of=FPGS, xshift=-12em, yshift=-5em] (x_update_yes) {$\( x, th \) \mapsto \( x_\star, f \( x_\star \) \)$.};
    \node[process, below of=x_update_yes, xshift=12em, yshift=-5em] (l_update) {$l \mapsto 1.975 \cdot l$.};
    \node[process, below of=l_update, yshift=-2em] (fork) {$\textsc{stop}_f \( t, y \)$?};
    \node[process, below of=fork, yshift=-4em] (end) {Output $x$ and STOP.};
    
    \draw[arrow] (start) -- (FPGS);
    \draw[-latex] (FPGS) -- node[pos=0.4,fill=white,inner sep=2pt]{$f \( x_\star \) > y$} (x_update_yes);
    \draw[-latex] (FPGS) -- node[pos=0.4,fill=white,inner sep=2pt]{$f \( x_\star \) \leqslant y$} (l_update);
    \draw[arrow] (x_update_yes) -- (l_update);
    \draw[arrow] (l_update) -- (fork);
    \draw[-latex] (fork) -- node[pos=0.4,fill=white,inner sep=2pt]{yes} (end);
    \draw[-latex] (fork) -| ++(5.5, 0) |- node[pos=0.4,fill=white,inner sep=2pt]{no} (FPGS);

    \end{tikzpicture} }
    \caption{Fixed-point Grover Adaptive Search for QUBO}
    \label{figure:flowchart}
\end{figure}

\medskip

The above algorithm has a further benefit that is not present in the original, one-time search case: as the threshold value increases, the set of marked states decreases and hence the query complexity with which the new highest value configuration is found is a suitable candidate for the next trial-and-error round. We summarize this point in the next lemma. Before that, let us make the following definition: Fix a QUBO problem $f : \F_2^n \rightarrow \Z$. Recall that of $y \in \Z$
\begin{equation}
    \cM_{f, y} = \left\{ x \in \F_2^n | y > f(x) \right\}.
\end{equation}
Now, for each $\epsilon \in \( 0, 1 \)$, let
\begin{equation}
    \cF_{f, \epsilon} \eqdef \left\{ x \middle| \( 1 - \epsilon \) 2^n \leqslant \left| \cM_{f, f(x)} \right| \right\},
\end{equation}
that is, $x \in \cF_{f, \epsilon}$ exactly when $x$ is in the top $\epsilon 2^n$ configurations (with respect to the values of $f$). Now we are ready to state and prove our lemma.

\begin{lemma}
    \label{sec:performance_guarantee}
    For each $f$ and $\epsilon$ as above, the Fixed-point Grover Adaptive Search for QUBO finds a configuration in $\cF_{f, \epsilon}$ in $O \( \tfrac{\log_2 (n)}{\sqrt{\epsilon}} \)$ time.
\end{lemma}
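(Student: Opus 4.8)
The plan is to split the cost of a single query from the number of queries and bound each separately. The per-query cost was already settled in \Cref{sec:time}: with $d = O(n)$ and the logarithmic-depth oracle of \Cref{sec:ancillas} one has $C_s + C_t = O(\log_2(n))$. So it suffices to show that the adaptive loop of \Cref{figure:flowchart} produces a configuration of $\cF_{f, \epsilon}$ after an expected total of $O(1/\sqrt{\epsilon})$ queries, i.e. that $\mathbb{E}[l_{\mathrm{total}}] = O(1/\sqrt{\epsilon})$.

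The first step I would take is to rephrase membership in $\cF_{f, \epsilon}$ in terms of the marked fraction. Set $v^\ast \eqdef \min \{ f(x) \mid x \in \cF_{f, \epsilon} \}$; directly from the definition one checks that $\cF_{f, \epsilon} = \{ x \mid f(x) \geqslant v^\ast \}$ and that $\left| \cF_{f, \epsilon} \right| \leqslant \epsilon 2^n$, with equality up to tie-multiplicities. For a threshold $th$ write $\lambda (th) \eqdef \left| \{ x \mid f(x) > th \} \right| / 2^n$ for the fraction of states the oracle $S_t$ flags. The key monotone fact is that while the current best configuration $x$ lies outside $\cF_{f, \epsilon}$ we have $f(x) < v^\ast$, hence $\{ f > f(x) \} \supseteq \cF_{f, \epsilon}$ and therefore $\lambda (f(x)) \geqslant \epsilon - 2^{-n}$; conversely, the first time a measured $x_\star$ satisfies $f(x_\star) \geqslant v^\ast$ the search has entered $\cF_{f, \epsilon}$ and the best value is updated for the last time. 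Thus the marked fraction stays $\gtrsim \epsilon$ throughout, and crossing into $\cF_{f, \epsilon}$ coincides with the (would-be next) marked fraction first dropping below $\epsilon$.

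With this in hand I would reduce to \Cref{conjecture:tau}. The adaptive loop is exactly the geometric FPGS schedule of \Cref{sec:parameter_optimization_for_FPGAS}, with the same $\delta = 0.4038$ and ratio $\alpha = 1.975$, now applied to a marked set that always contains $\cF_{f, \epsilon}$; one therefore expects its query count to be governed by the same analysis as \Cref{conjecture:tau} with effective target fraction $\approx \epsilon$. Since the per-round counts $\lceil \alpha^{r - 1} \rceil$ grow geometrically, the telescoping estimate in the proof of \Cref{conjecture:tau} shows that $l_{\mathrm{total}}$ is, up to a constant, the count $\lceil \alpha^{\sigma - 1} \rceil$ of the final round $\sigma$. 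By \cref{eq:lcrit} the critical number of queries at marked fraction $\approx \epsilon$ is $O(1/\sqrt{\epsilon})$, so treating $\lceil \alpha^{\sigma - 1} \rceil = O(1/\sqrt{\epsilon})$ yields $\mathbb{E}[l_{\mathrm{total}}] = O(1/\sqrt{\epsilon})$, and multiplying by $C_s + C_t = O(\log_2(n))$ gives the claimed $O \( \tfrac{\log_2 (n)}{\sqrt{\epsilon}} \)$.

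The delicate point, and the reason the argument is only heuristic, is precisely this last identification. A single successful round need not land in $\cF_{f, \epsilon}$: once $l \geqslant l_{\mathrm{crit}}$, FPGS returns an essentially uniform marked state, so each success shrinks the marked set by a random (roughly uniform) factor, and several ``climbing'' rounds may precede entry into $\cF_{f, \epsilon}$. Every such round inflates $l$ by $\alpha$, so the genuine worry is that $\alpha^{\sigma}$ overshoots $1/\sqrt{\epsilon}$ by a problem-dependent amount rather than a universal constant. Controlling this overhead, ideally by coupling the run to the fixed-target search of \Cref{conjecture:tau} and adding a concentration estimate on the number of climbing steps, together with a separate treatment of configurations of large value-multiplicity (ties), where the $2^{-n}$ corrections above must be revisited, is the main obstacle and is what I would expect to occupy the bulk of a fully rigorous proof.
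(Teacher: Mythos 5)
Your skeleton --- the per-query cost of $O(\log_2(n))$ justified exactly as in \Cref{sec:time}, multiplied by an expected query count of $O(1/\sqrt{\epsilon})$ --- is the same as the paper's, and your reformulation of membership in $\cF_{f, \epsilon}$ through the marked fraction is implicit there as well. The decisive step, however, is handled differently. You try to bound the \emph{number of rounds}: total queries $\approx$ final-round count, and final-round count $\approx O(1/\sqrt{\epsilon})$ via the critical length of \cref{eq:lcrit} at fraction $\epsilon$. The paper never counts rounds at all. It estimates the expected number of \emph{additional} queries between two consecutive threshold updates as $g_0 \bigl( \lambda_{\mathrm{next}}^{-1/2} - \lambda_{\mathrm{previous}}^{-1/2} \bigr)$, with $g_0 = 1.433$ the constant of \Cref{conjecture:tau}, the rationale being that the non-resetting schedule gives the search at the new threshold a head start worth the queries already invested; it then rewrites this increment as $g_0 \, \rd F / \bigl( 2 (1 - F)^{3/2} \bigr)$ in terms of the (assumed smooth) value distribution $F$ and integrates from $F = 0$ to $F = 1 - \epsilon$, obtaining $g_0 ( \epsilon^{-1/2} - 1 )$ in one stroke. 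This telescoping-and-integration device --- which charges the cost of your ``climbing'' rounds to the decrease of the marked fraction rather than to a round count --- is precisely the piece missing from your write-up.

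That said, the obstacle you flag is genuine, and the paper's telescoping does not actually resolve it: the increment formula silently assumes that the accumulated query count keeps tracking $g_0 / \sqrt{\lambda_{\mathrm{current}}}$, which is exactly the no-overshoot property you doubt; this is why the paper offers only a ``heuristic proof'' under a smoothness assumption on $F$. Your worry can even be made quantitative. Conditioned on success, FPGS returns a uniformly random marked state, so each threshold update multiplies the marked fraction by an (approximately) uniform factor, and reaching $\cF_{f, \epsilon}$ typically requires about $\ln(1/\epsilon)$ updates. Each update consumes at least one round, and $l$ is multiplied by $\alpha = 1.975$ every round with no reset, so the final per-round count is at least $\alpha^{\ln(1/\epsilon)} = \epsilon^{-\ln \alpha}$; since $\ln(1.975) \approx 0.68 > \tfrac{1}{2}$, the overshoot is a power of $1/\epsilon$, not a constant (and by Jensen the same holds in expectation). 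So the step you wanted, $\lceil \alpha^{\sigma - 1} \rceil = O(1/\sqrt{\epsilon})$, fails as stated, and the same count shows the paper's tracking assumption cannot hold for this $\alpha$ either: a consistent version of the argument seems to require either $\alpha < e^{1/2} \approx 1.649$ or a schedule that restarts $l$ after every successful update (as the Grover Adaptive Search of Gilliam et al.\ effectively does). Your identification of the crux is therefore correct, and it cuts deeper than a gap in your own argument alone.
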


\begin{proof}[Heuristic proof of \Cref{sec:performance_guarantee}:]
    Let $g_0 = 1.433$.
    
    If, in the Fixed-point Grover Adaptive Search for QUBO in \Cref{figure:flowchart}, a new configuration is found using a FPGS search with $l$ queries, then the expected number of queries needed for the next successful search is given by
    \begin{equation}
            \begin{split}
            \mathbbm{E} \left[ l_{\mathrm{next}} \middle| l_{\mathrm{previous}} = l \right]& \approx \tfrac{g_0}{\sqrt{\lambda_{\mathrm{next}}}} - l \\ & \approx g_0 \( \tfrac{1}{\sqrt{\lambda_{\mathrm{next}}}} - \tfrac{1}{\sqrt{\lambda_{\mathrm{previous}}}} \)\\ & \approx g_0 \tfrac{\rd F}{2 \( 1 - F \( y_\lambda \) \)^{3/2}},
        \end{split}
    \end{equation}
    where
    \begin{itemize}
        \item $\lambda \eqdef \lambda_{\mathrm{previous}}$,
        \item $F : \rl \rightarrow \left[ 0, 1 \right]$ is the cumulative distribution for $f$, that is,
        \begin{equation}
            F \( y \) \eqdef \tfrac{\left| \cM_{f, y} \right|}{2^n},
        \end{equation}
        \item $y_\lambda \eqdef F^{- 1} \( 1 - \lambda \)$,
        \item and $\rd F \eqdef \lambda_{\mathrm{previous}} - \lambda_{\mathrm{next}}$.
    \end{itemize}
    We assume that $F$ is well-approximated by a smooth distribution.

    The above yields that
    \begin{equation}
        \mathbbm{E} \left[ l_{\mathrm{total}, \epsilon} \right] \approx g_0 \tfrac{1}{\sqrt{1 - F}} \big|_0^{1 - \epsilon} = g_0 \( \tfrac{1}{\sqrt{\epsilon}} - 1 \).
    \end{equation}
    Since the shallowest implementation of a single query for our FPGS for QUBO is $O \( \log_2 (n) \)$, we get that the expected amount of time needed is $O \( \tfrac{\log_2 (n)}{\sqrt{\epsilon}} \)$.
\end{proof}

\medskip

\begin{remark}
    As remarked earlier, our FPGS for QUBO is only expected to outperform random guessing when $\lambda = o \( \tfrac{1}{\log_2 (n)^2} \)$, thus one can potentially further improve the above method by initially doing more rounds of random sampling (up to $\sim \log_2 (n)^2$), using the largest observed value as the first threshold and starting with $l = O \( \log_2 (n) \)$ queries.
\end{remark}

\smallskip

\begin{remark}
    The above results help cut the total resource cost and total time of the algorithm by a constant factor, but, of course, does not change the time complexity in the big-$O$ sense. Note however, that compared to the Grover Adaptive Search for QUBO method of Gilliam et al., we have improved the non-Clifford complexity of each query by (at least) a factor of two and improved the expected number of queries by a factor of $\approx \tfrac{2.25}{1.433} \approx 3.14$, yielding a total improvement of $\approx 6.28$. Similarly, both of these results yield reductions in time the overall time complexity of the method, especially when ancillas are used.
\end{remark}

\bigskip

\section{Comparison to Grover Adaptive Search}
\label{sec:comparison}

In this section, we give a detailed comparison of the Grover Adaptive Search of \cite{gilliam_grover_2021} and our Fixed-point method. We first contrast the oracle designs and we construct a simulation to compare performances up to $n \leqslant n$ bit QUBOs.

\smallskip

The authors in \cite{gilliam_grover_2021} gave a design similar to our $U_{f, d}$ without ancillas. However, their construction uses simply-controlled (for diagonal terms) and doubly-controlled (for off-diagonal term) phase gates, which are more costly than ours using fanout gates and uncontrolled phase gates. Furthermore, in their design each control qubit, $\ket{x_j}$, needs to access each target qubit, $\ket{y_a}$, sequentially. Thus, even assuming a charitable interpretation of their design (that is, assuming that instead of controlled phase gates, controlled $R_Z$ gates are used) and a good compilation of the circuit, our designs yields an at least a $2$-fold improvement in the number of non-Clifford gates (as a controlled $R_Z$ gate costs two single qubit $R_Z$ gates), an at least $\approx d$-fold reduction in $R_Z$ depth (as now $d$ of them can be run parallel), and an at least $\approx \tfrac{d}{\log_2 (d)}$-fold reduction in total depth.

Finally, our design can make use of ancilla qubits to achieve further depth reduction without any extra non-Clifford cost; a feature not present in \cite{gilliam_grover_2021}.

\smallskip

Next we benchmark the performances of these methods on a concrete QUBO problem, by rewriting the evolution of the distributions as time-dependent discrete-time Markov chains. We describe the details of the simulation below.

Let us assume that an instance of QUBO is given, called $f$, let $F : \rl \rightarrow \left[ 0, 1 \right]$ be the cumulative distribution of $f$ as above, and let $y \in \F_2^n$. When one runs FPGS for a threshold $f(y)$, with parameter $\delta$ and $l$ queries, then the probability of success if given by
\begin{equation}
    \P_{\scriptscriptstyle \mathrm{FPGS}, y, \delta, l} = 1 - \( T_{2 l + 1} \( T_{\nicefrac{1}{2 l + 1}} \( \nicefrac{1}{\delta} \) \sqrt{1 - F(y)} \) \)^2 \delta^2.
\end{equation}
When $x \in \F_2^n$ with $f(x) > f(y)$ is found, then the threshold changes in our Fixed-point Grover Adaptive Search for QUBO and the role of $y$ (as the ``best configuration seen'') is now taken by $x$, otherwise $y$ keeps its role. Thus, the probability of a given $x$ replacing $y$ is
\begin{equation}
    \P_{\scriptscriptstyle \mathrm{FPGS}} \( y \rightarrow x, \delta, l \) = \tfrac{\P_{\scriptscriptstyle \mathrm{FPGS}, y, \delta, l}}{\( 1 - F(y) \) 2^n},
\end{equation}
where the denominator $\( 1 - F(y) \) 2^n$ is the number of configurations $x \in \F_2^n$ with $f(x) > f(y)$. The probability of $y$ keeping its role is
\begin{equation}
    \P_{\scriptscriptstyle \mathrm{FPGS}} \( y \rightarrow y, \delta, l \) = 1 - \P_{\scriptscriptstyle \mathrm{FPGS}, y, \delta, l}.
\end{equation}
All other transitions are impossible. Now assume that the $y$ was randomly picked by a distribution $p : \F_2^n \rightarrow \left[ 0, 1 \right]$. Then the probability of changing having $x$ after the above FPGS is given by
\begin{equation}
    \P_{\scriptscriptstyle \mathrm{FPGS}} (x | p, \delta, l) = \sum\limits_{y \in \F_2^n} p(y) \P_{\scriptscriptstyle \mathrm{FPGS}} \( y \rightarrow x, \delta, l \).
\end{equation}
Similarly, using \cite{boyer_tight_1998}*{Lemma~2}, one sees that in the case of (non-fixed-point) Grover Adaptive Search, the analogous formulae are given by
\begin{align}
    \theta_y                                                        &= \tfrac{\mathrm{arcsin} \( \sqrt{1 - F(y)} \)}{\pi}, \\
    \P_{\scriptscriptstyle \mathrm{GAS}, y, m}                      &= \tfrac{1}{2} \( 1 - \tfrac{\mathrm{sinc} \( 4 m \theta \)}{\mathrm{sinc} \( 2 \theta \)} \), \\
    \P_{\scriptscriptstyle \mathrm{GAS}} \( y \rightarrow x, m \)   &= \left\{ \begin{array}{ll} \tfrac{\P_{\scriptscriptstyle \mathrm{GAS}, y, m}}{\( 1 - F(y) \) 2^n}, & \mbox{if } f(x) > f(y) \\ 1 - \P_{\scriptscriptstyle \mathrm{GAS}, y, m}, & \mbox{if } x = y, \\ 0, & \mbox{otherwise.} \end{array} \right. \\
    \P_{\scriptscriptstyle \mathrm{GAS}} (x | p, m)                 &= \sum\limits_{y \in \F_2^n} p(y) \P_{\scriptscriptstyle \mathrm{GAS}} \( y \rightarrow x, m \).
\end{align}

In the first step of both our Fixed-point Grover Adaptive Search for QUBO and the Grover Adaptic Search for QUBO of Gilliam et al., one choses $y \in \F_2^n$ at random, that is $p_0 (y) = \tfrac{1}{2^n}$ and then runs the appropriate version with $l = m = 1$, updates $y$ if a larger value was found, and finally updates $l$ and $m$. Given $f$, the above formulae allow us to compute the probabilities of getting configurations in either of these methods.

\smallskip

Let now $f$ be cut function of the randomly generated (connected) Erd\H{o}s--R\'enyi graph with $n = 32$ vertices shown in \Cref{figure:graph}. The graph has a maximum cut of $146$ and the expected value of a random cut is $114.5$, which is $78.42$\% of the maximum cut. The variance is $5.18$\%.

\begin{figure}[h!]
    \centering
    \includegraphics[scale=.7]{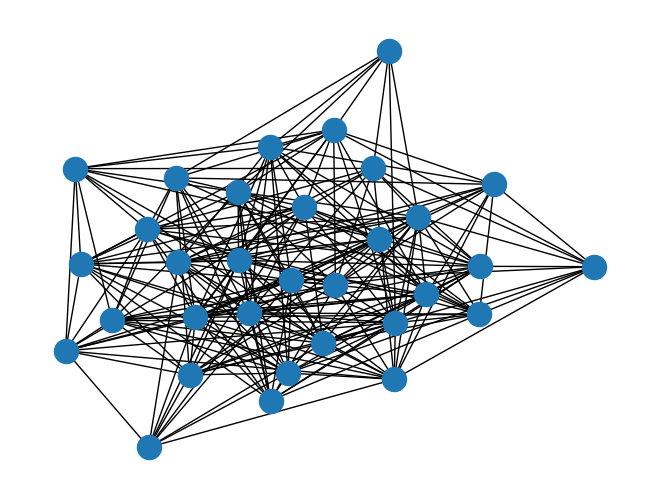}
    \caption{The test graph.}
    \label{figure:graph}
\end{figure}

We ran both our Fixed-point Grover Adaptive Search for QUBO and the Grover Adaptic Search for QUBO of Gilliam et al. for four iterations. Straightforward computations, using the results of the previous sections, yield that the resource requirements of the two methods are similar in this case, so that the comparison is meaningful. We found that the expected value of cut that our method produces is $88.19$\% of the maximum cut, compared to the $84.86$\% of the method of Gilliam et al., and the variances are $2.6$\% and $3.33$\%, respectively. Our method amplified the chance of finding the maximum cut by a factor of more than $88$, compared to random guesses, and by a factor of $8$, compared to the Grover Adaptic Search for QUBO of Gilliam et al. The cut-distributions of a randomly picked configuration (``before Grover''), the distribution after the Grover Adaptic Search for QUBO of Gilliam et al., and after our Fixed-point Grover Adaptive Search for QUBO is given on \Cref{figure:cuts}.

\begin{figure}[!h]
    \centering
    \includegraphics[scale=.5]{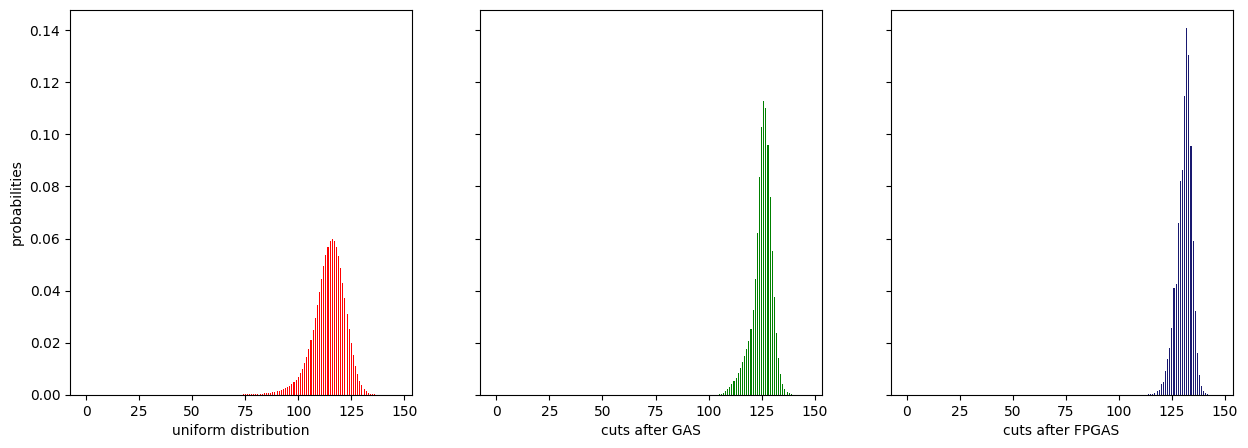}
    \caption{The cut-distributions.}
    \label{figure:cuts}
\end{figure}

\medskip

\begin{remark}
    While in the simulations we used the above $\alpha$ parameter for our Fixed-point Grover Adaptive Search for QUBO and $\Lambda = \tfrac{6}{5}$ for the Grover Adaptic Search for QUBO of Gilliam et al., which is recommended in \cite{boyer_tight_1998}*{Section~4}, we saw that for small bit numbers ($n \lesssim 32$), different values might work better, albeit the change is marginal. The code to reproduce our testing is available in \cite{github_fpgs_for_qubo}.
\end{remark}

\smallskip

\begin{remark}
    Word of caution: For the above mentioned small bit numbers, repeated random guess $O \( n^2 \)$ times (so at the same time complexity of our method) performs comparably.
\end{remark}

\bigskip

\section{Conclusion}
\label{sec:conclusion}

We constructed a marker oracle for QUBO problems with improved complexities to the previously known designs. These oracles are expected to be useful for both Grover-, and QAOA-type quantum circuits. We also studied adaptive optimization methods, using our constructions in conjunction with FPGS. By the results of \cite{li_quantum_2019} and our improved oracle design, it is immediate that our Fixed-point Grover Adaptive Search for QUBO has better performance guarantees than that of the original Grover Adaptive Search of \cite{gilliam_grover_2021}. However, more extensive studies of the theoretical foundations (such as a better understanding of the distribution of values in QUBO problems) as well as further circuit simplification and resource estimates are needed.

\medskip

\subsection*{Code and Data Availability:} Supplementary Qiskit code for this paper is available in \cite{github_fpgs_for_qubo}.

\smallskip

\subsection*{Acknowledgments:} We are grateful to Amazon Web Services and QLab for providing credits to access IonQ's QPUs. We also thank Constantin Gonciulea, Tom Ginsberg, Jan Tu\l{}owiecki, and Shahaf Asban for providing valuable feedback. Finally, we thank QuForce for bringing the authors together for the project.

\bigskip

\appendix

\section{Experimental testing of the encoder oracle testing on IonQ's Quantum Computers:} The above marker oracle was tested on IonQ's Aria $2$ QPU. We used $9$ qubits ($5$-bit QUBO with $4$ ancillas) with $5000$ shots. The matrix of the quadratic form was
\begin{equation}
    Q = \begin{pmatrix}
            2 & - 1 & 0 & - 1 & 0 \\
            - 1 & 1 & 0 & 0 & 0 \\
            0 & 0 & 2 & 0 & - 1 \\
            - 1 & 0 & 0 & 2 & 0 \\
            0 & 0 & - 1 & 0 & 2
        \end{pmatrix},
\end{equation}
with threshold $y = 4$. Since the maximum of the integer-valued, quadratic polynomial $f \( x \) = x^T Q x$ is $5$, only maximal configurations should be marked, that is, $\ket{01110},  \ket{01011}$, and $\ket{01111}$. The oracle was decomposed into $1$-qubit and CNOT gates, yielding a circuit depth of $47$ and a CNOT count of $66$. The experimental results are given in \Cref{table:oracle} below.

\begin{table}[!ht]
    \centering
    \begin{tabular}{|c|c|c|c|c|}
        \hline
        $x$ & $f (x)$ & $y_0 = 0$ & $y_0 = 1$ & \textnormal{total} \\
        \hline
        00000 & 0 & {\color{ForestGreen} 3.44\%} & {\color{red} 0.74\%} & 4.18\% \\
        10000 & 2 & {\color{ForestGreen} 1.68\%} & {\color{Red} 0.36\%} & 2.04\% \\
        10000 & 1 & {\color{ForestGreen} 3.18\%} & {\color{Red} 0.52\%} & 3.70\% \\
        10000 & 1 & {\color{ForestGreen} 4.08\%} & {\color{Red} 0.86\%} & 4.94\% \\
        10000 & 2 & {\color{ForestGreen} 2.16\%} & {\color{Red} 0.68\%} & 2.84\% \\
        10100 & 4 & {\color{ForestGreen} 2.02\%} & {\color{Red} 0.40\%} & 2.42\% \\
        01100 & 3 & {\color{ForestGreen} 1.34\%} & {\color{Red} 0.36\%} & 1.70\% \\
        11100 & 3 & {\color{ForestGreen} 2.90\%} & {\color{Red} 0.32\%} & 3.22\% \\
        00010 & 2 & {\color{ForestGreen} 1.34\%} & {\color{Red} 0.36\%} & 1.70\% \\
        10010 & 2 & {\color{ForestGreen} 5.26\%} & {\color{Red} 0.74\%} & 6.00\% \\
        01010 & 3 & {\color{ForestGreen} 2.52\%} & {\color{Red} 0.34\%} & 2.86\% \\
        11010 & 1 & {\color{ForestGreen} 2.84\%} & {\color{Red} 0.38\%} & 3.22\% \\
        00110 & 4 & {\color{ForestGreen} 1.58\%} & {\color{Red} 0.20\%} & 1.78\% \\
        10110 & 4 & {\color{ForestGreen} 2.94\%} & {\color{Red} 0.88\%} & 3.82\% \\
        01110 & 5 & {\color{Red} 1.06\%} & {\color{ForestGreen} 1.52\%} & 2.58\% \\
        11110 & 3 & {\color{ForestGreen} 2.20\%} & {\color{Red} 0.36\%} & 2.56\% \\
        \hline
    \end{tabular}

    \begin{tabular}{|c|c|c|c|c|}
        \hline
        $x$ & $f (x)$ & $y_0 = 0$ & $y_0 = 1$ & \textnormal{total} \\
        \hline
        00001 & 2 & {\color{ForestGreen} 1.68\%} & {\color{Red} 0.66\%} & 2.34\% \\
        10001 & 4 & {\color{ForestGreen} 1.06\%} & {\color{Red} 0.26\%} & 1.32\% \\
        10001 & 3 & {\color{ForestGreen} 1.04\%} & {\color{Red} 0.30\%} & 1.34\% \\
        10001 & 3 & {\color{ForestGreen} 2.02\%} & {\color{Red} 0.48\%} & 2.50\% \\
        10001 & 2 & {\color{ForestGreen} 3.90\%} & {\color{Red} 0.78\%} & 4.68\% \\
        10101 & 4 & {\color{ForestGreen} 2.30\%} & {\color{Red} 0.36\%} & 2.66\% \\
        01101 & 3 & {\color{ForestGreen} 3.42\%} & {\color{Red} 0.58\%} & 4.00\% \\
        11101 & 3 & {\color{ForestGreen} 4.54\%} & {\color{Red} 0.34\%} & 4.88\% \\
        00011 & 4 & {\color{ForestGreen} 1.14\%} & {\color{Red} 0.38\%} & 1.52\% \\
        10011 & 4 & {\color{ForestGreen} 2.82\%} & {\color{Red} 0.44\%} & 3.26\% \\
        01011 & 5 & {\color{Red} 0.52\%} & {\color{ForestGreen} 1.16\%} & 1.68\% \\
        11011 & 3 & {\color{ForestGreen} 1.60\%} & {\color{Red} 0.52\%} & 2.12\% \\
        00111 & 4 & {\color{ForestGreen} 1.88\%} & {\color{Red} 0.26\%} & 2.14\% \\
        10111 & 4 & {\color{ForestGreen} 5.94\%} & {\color{Red} 1.74\%} & 7.68\% \\
        01111 & 5 & {\color{Red} 0.94\%} & {\color{ForestGreen} 2.46\%} & 3.40\% \\
        11111 & 3 & {\color{ForestGreen} 4.18\%} & {\color{Red} 0.74\%} & 4.92\% \\
        \hline
    \end{tabular}
    
    \caption{Test results with threshold $4$; values below or equal to $4$ should not be marked ($y_0 = 0$) and above $4$ should be marked ($y_0 = 1$). The {\color{ForestGreen} green} numbers are the percentages of the given state being measured with the correct marking, while the {\color{Red} red} ones are the percentages of incorrect markings. The total occurrence of a state is the sum of the corresponding {\color{ForestGreen} green} and {\color{Red} red} percentages which in the ideal case should all be $\nicefrac{3}{32} = 3.125\%$.}
    \label{table:oracle}
\end{table}

While the oracle is $2$--$5$ times more likely to correctly mark states, further experiments and simulations (not listed here) show that contemporary QPUs are too noisy to run our circuits without error correction. Furthermore, it is worth noting that the distribution of states is no longer uniform.

    \bibliography{references}

\end{document}